\RequirePackage{amsmath}

\newif\ifsubmit     
\newif\ifllncs      
\newif\ifexabs      
\newif\ifblind

\submittrue

\ifllncs
  \documentclass[runningheads,a4paper]{llncs}

  \pagestyle{plain}
\else
  \documentclass[letterpaper,11pt,pdfa]{article}
  \usepackage[in]{fullpage}
\fi

\usepackage{iftex}
\ifPDFTeX
  \usepackage[utf8]{inputenc}
  \usepackage[noTeX]{mmap}
  \usepackage[T1]{fontenc}
\fi
\ifLuaTeX
  \usepackage{luatex85}
  \usepackage[noTeX]{mmap}
\fi

\usepackage{mdframed}
\usepackage{amsmath}
\usepackage{amsfonts}
\usepackage{amssymb}
\usepackage{amsthm}
\usepackage{color}
\usepackage{hhline}

\usepackage{appendix}
\usepackage{algorithm}
\usepackage{algpseudocode}
\usepackage{braket}
\usepackage{comment}
\usepackage{url}
\usepackage{bbm}
\usepackage{multicol}
\usepackage{mdframed}
\usepackage{multirow}

\usepackage{breakcites}

\usepackage[bookmarks]{hyperref}
\usepackage[nameinlink]{cleveref}
\hypersetup{
  colorlinks,
  allcolors=blue
}

\usepackage{tikz}

\ifllncs

  \spnewtheorem{claim}{Claim}{\bfseries}{\rmfamily}
  \crefname{claim}{claim}{claims}
  \Crefname{claim}{Claim}{Claims}
\else
  \newtheorem{theorem}{Theorem}[section]
  \newtheorem{definition}[theorem]{Definition}
  \newtheorem{remark}[theorem]{Remark}
  \newtheorem{lemma}[theorem]{Lemma}

  \newtheorem*{remark*}{Remark}

  \newtheorem*{theorem*}{Theorem}
  \newtheorem*{lemma*}{Lemma}

\usepackage[style=alphabetic,minalphanames=3,maxalphanames=4,maxnames=99,backref=true]{biblatex}

  \DeclareFieldFormat{eprint:iacr}{Cryptology ePrint Archive: \href{https://ia.cr/#1}{\texttt{#1}}}
  \DeclareFieldFormat{eprint:iacrarchive}{Cryptology ePrint Archive: \href{https://eprint.iacr.org/archive/#1}{\texttt{#1}}}
  \addbibresource{biblio.bib}

  \AtEveryBibitem{
    \clearlist{address}
    \clearfield{date}
    \clearfield{isbn}
    \clearfield{issn}
    \clearlist{location}
    \clearfield{month}
    \clearfield{series}
 
    \ifentrytype{book}{}{
      \clearlist{publisher}
      \clearname{editor}
    }
  }
\fi

\usepackage{appendix}
\usepackage{algorithm}
\usepackage{algpseudocode}
\usepackage{braket}
\usepackage{comment}
\usepackage{url}
\usepackage{multicol}
\usepackage{tikz}
\usetikzlibrary{arrows,automata,positioning}
\usepackage{caption}
\usepackage[caption=false]{subfig}
\usepackage[font=small,labelfont=bf]{caption}
\usepackage{comment}
\usepackage{pifont}
\usetikzlibrary{patterns}

\usepackage{enumitem}
\setlist[description]{noitemsep}
\setlist[enumerate]{noitemsep}
\setlist[itemize]{noitemsep}

\usepackage{soul, xcolor, xparse}
\makeatletter
  \ExplSyntaxOn
    \cs_new:Npn \white_text:n #1
    {
      \fp_set:Nn \l_tmpa_fp {#1 * .01}
      \llap{\textcolor{white}{\the\SOUL@syllable}\hspace{\fp_to_decimal:N \l_tmpa_fp em}}
      \llap{\textcolor{white}{\the\SOUL@syllable}\hspace{-\fp_to_decimal:N \l_tmpa_fp em}}
    }
    \NewDocumentCommand{\whiten}{ m }
    {
      \int_step_function:nnnN {1}{1}{#1} \white_text:n
    }
  \ExplSyntaxOff
  
  \NewDocumentCommand{ \varul }{ D<>{5} O{0.2ex} O{0.1ex} +m } {%
    \begingroup
    \setul{#2}{#3}%
    \def\SOUL@uleverysyllable{%
      \setbox0=\hbox{\the\SOUL@syllable}%
      \ifdim\dp0>\z@
      \SOUL@ulunderline{\phantom{\the\SOUL@syllable}}%
      \whiten{#1}%
      \llap{%
        \the\SOUL@syllable
        \SOUL@setkern\SOUL@charkern
      }%
      \else
      \SOUL@ulunderline{%
        \the\SOUL@syllable
        \SOUL@setkern\SOUL@charkern
      }%
      \fi}%
    \ul{#4}%
    \endgroup
  }
\makeatother

\setcounter{secnumdepth}{4}

\usepackage{mleftright}

\newcommand{\norm}[1]{\left\lVert#1\right\rVert}

\newcommand{\As}{\mathcal{A}}
\newcommand{\Bs}{\mathcal{B}}

\newcommand{\Sim}{{\sf Sim}}

\newcommand{\cA}{\mathcal{A}}

\newcommand{\cB}{\mathcal{B}}

\newcommand{\cC}{\mathcal{C}}

\newcommand{\cR}{{\mathcal{R}}}

\mdfdefinestyle{figstyle}{ 
  linecolor=black!7, 
  backgroundcolor=black!7, 
  innertopmargin=10pt, 
  innerleftmargin=25pt, 
  innerrightmargin=25pt, 
  innerbottommargin=10pt 
}


\renewcommand{\kappa}{\ell}


\newcommand{\ketbra}[2]{\ket{#1}\!\bra{#2}}

\newcommand\numeq[1]%
  {\stackrel{\scriptscriptstyle(\mkern-1.5mu#1\mkern-1.5mu)}{=}}

\newcommand{\xo}[1]{{x}_{{\sf o}, #1}}
\newcommand{\yo}[1]{{y}_{{\sf o}, #1}}
\newcommand{\xout}{\vec{x}_{\sf o}}
\newcommand{\yout}{\vec{y}_{\sf o}}

\newcommand{\cX}{\mathcal{X}}

\newcommand{\cY}{\mathcal{Y}}

\newcommand{\cZ}{\mathcal{Z}}

\newcommand{\multigame}{\allowbreak multi-output $k$-search game } %

\newcommand{\gamemath}{\mathcal{G}}

\newlength{\saveparindent}
\setlength{\saveparindent}{\parindent}
\newlength{\saveparskip}
\setlength{\saveparskip}{\parskip}

\newcounter{ctr}

\newcounter{ectr}

\newcommand{\ignore}[1]{}

\newcommand{\sym}{\mathcal{S}} 

\title{
Improved Quantum Lifting by Coherent Measure-and-Reprogram
}



\author{Alexandru Cojocaru \thanks{University of Edinburgh}
\and Juan Garay \thanks{Texas A\&M University}
\and Qipeng Liu \thanks{UC San Diego}
\and Fang Song \thanks{Portland State University}
}

\date{}

\begin{document}

\maketitle

\begin{abstract}
We give a tighter lifting theorem for security games in the quantum random oracle model. At the core of our main result lies a novel measure-and-reprogram framework that we call {\em coherent reprogramming}. This framework gives a tighter lifting theorem for query complexity problems, that only requires purely classical reasoning. As direct applications of our lifting theorem, we first provide a quantum direct product theorem in the average case --- i.e., 
an enabling tool to determine the hardness of solving multi-instance security games.  
This allows us to derive in a straightforward manner the hardness of  
various security games, for example (i) the non-uniform hardness of salted games, 
(ii) the hardness of specific cryptographic tasks such as the multiple instance version of one-wayness and collision-resistance, and (iii) uniform or non-uniform hardness of many other games. 
\end{abstract}

 \newpage
 \tableofcontents

\section{Introduction}

Hash functions are a fundamental workhorse in modern cryptography. Efficient constructions such as SHA-2 and SHA-3 are widely used in real-world cryptographic applications. To facilitate the analysis of constructions based on hash functions, Bellare and Rogaway~\cite{bellare1993random} proposed a framework known as the random oracle model (ROM). Recent development of quantum computing demands re-examining security against potential quantum attackers. The \emph{quantum random oracle model} (QROM) has since been proposed by Boneh {\em et al.}~\cite{AC:BDFLSZ11} as an extension of ROM by taking into account quantum attackers. Various techniques have been developed for analyzing security in the QROM; however, they are often either {\em ad-hoc} (for specific scenarios) or too involved to apply. 

In this paper we revisit a general tool for lifting security 
from ROM to QROM by
Yamakawa and Zhandry~\cite{YZ21}. The lifting theorems are applicable to search games in (Q)ROM between a classical challenger interacting with an adversary (e.g., think of an adversary that aims to find a preimage of 0 in the random oracle, 
with the challenger 
querying the random oracle to verify the adversary's answer). Roughly speaking, the lifting theorems assert that if a search game with a challenger performing a constant number of queries to the random oracle is hard against a classical adversary, then it is also hard against a quantum adversary in the QROM. Specifically, if the challenger performs $k$ queries and if a quantum adversary performs $q$ quantum queries and wins the search game with probability $\epsilon$, then there exists a classical adversary performing only $k$ classical queries winning with probability $\epsilon / (2q +1)^{2k}$. 

This tool is particular powerful to establish  quantum query lower bounds in the QROM. Let us consider function inversion from above for example; the goal is to find an input $x$, whose image equals $0$. In this case, $k = 1$ and 
$$\frac{\epsilon}{(2q+1)^{2k}} = \frac{\epsilon}{(2q+1)^{2}} \leq \frac{1}{N}.$$ 
This is because a single query reveals a pre-image of $0$ with probability at most $1/N$. Therefore we have
$\epsilon \leq {(2q+1)^2} /{N}$, 
which immediately reproduces the tight bound for the famous Grover's search problem~\cite{BBBV97}. However, for a $k$-search problem whose goal is to find $k$ distinct inputs that all map to $0$, the bound derived from~\cite{YZ21} is $O\left({(k q)^2}/{N} \right)^k$ for any quantum algorithm with $k q$ queries, which has a large $k^{2k}$ factor gap from the tight bound $\Theta\left({q^2}/{N} \right)^k$
\footnote{
 {We believe this is a folklore result that to our knowledge, this bound follows from a result in \cite{CGKSW23} (Theorem 3.1).
    Moreover, we would like to emphasize that since our main result is a strengthening of the lifting Lemma of \cite{YZ21}, we can also show that our result concerning the bound of this problem is stronger than the bound derived from \cite{YZ21}.}
}
. Similar weaknesses appear in a variety of problems involving multiple inputs. 

In this work, we derive a new tighter lifting theorem for search games. If the challenger performs $k$ queries and the quantum adversary performs $q$ quantum queries and wins the search game with probability $\epsilon$, then there exists a quantum adversary performing only $k$ quantum queries and winning with probability $\epsilon /{2^{2k}{{q + k} \choose k}^2}$, improving 
on the previous lifting theorem by Yamakawa and Zhandry. Let us consider a $(k q)$-quantum-query algorithm for the previous $k$-search problem.  Our bound (in this case, $q$ in the theorem should be $k q$) gives the tight bound as below:
\begin{align*}
    \frac{\epsilon}{{2^{2k}{{k q + k} \choose k}^2}} \leq \frac{1}{N^k} \quad\longleftrightarrow\quad \epsilon \leq \frac{2^{2k} \binom{k q + k}{k}}{N^k}^2 \leq O\left( \frac{(q+1)^2}{N} \right)^k \, .
\end{align*}

To achieve this, we develop a new measure-and-reprogram technique 
which is a key technical contribution of our work.
The technique, which we call  
\emph{coherent reprogramming}, 
improves on the recent results on adaptively reprogramming QRO on multiple points
by Don {\em et al.}~\cite{DFM20,DFMS22} and Liu and Zhandry \cite{LZ19},
yielding
tighter reprogramming bounds than the existing measure-and-reprogram proofs. 
As an immediate consequence, we are able to derive tighter quantum hardness bounds 
for many applications, such as  direct-product theorems, salted security, and non-uniform security.  

\subsection{Summary of Our Results}

\noindent{\bf Lifting Theorem for Search Games.}
Our central result is a new lifting theorem for search games that relates (and upper bounds) the success probability of an arbitrary quantum algorithm to the success probability of a quantum algorithm performing a small number of queries to the RO.
More formally:

\begin{theorem}[Quantum Lifting Theorem (Informal)]\label{thm:intro_main1}
Let $\gamemath$ be a search game with a classical challenger $\cC$ that 
performs at most $k$ queries to the RO,
and let $\cA$ be a $q$-quantum query adversary in the game $\gamemath$ (against the $k$-classical query challenger $\cC$).
Then there exists a $k$-quantum query adversary $\cB$ 
such that:
\begin{equation*}
\Pr[\cB \textrm{ wins } \gamemath] \geq \frac{1}{{2^{2k}{{q + k} \choose k}^2}} \Pr[\cA \text{ wins } \gamemath].
\end{equation*}
\end{theorem}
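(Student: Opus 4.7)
The plan is to build the $k$-quantum-query adversary $\cB$ via the measure-and-reprogram paradigm of Don \emph{et al.}, but with a twist: rather than sampling the intercept schedule classically and uniformly, $\cB$ samples it \emph{coherently} in a dedicated auxiliary register. This lets us replace a product of $k$ single-point reprogramming losses by a single $\binom{q+k}{k}$ factor. Concretely, $\cB$ lazily simulates $\cA$ against its own random oracle $H_0$ and, at $k$ ``intercept'' slots specified by a schedule $\vj = (j_1, \ldots, j_k)$ with $0 \le j_1 \le \cdots \le j_k \le q$, it measures $\cA$'s query register to obtain an input $x_t$, spends one of its $k$ real quantum queries to fetch $y_t = H(x_t)$, and reprograms $H_0$ at $x_t$ to $y_t$ before letting $\cA$ continue. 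After $\cA$ halts, $\cB$ passes its output to $\cC$, whose $k$ classical queries go to the true RO $H$.

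The analysis proceeds in two stages. The first and central step is a coherent $k$-fold reprogramming lemma: for every success predicate defined by $\cC$ and every initial state of $\cA$,
\[
\Exp_{\vj}\bigl[\Pr[\cB_\vj\ \text{wins}]\bigr] \;\ge\; \frac{1}{2^{2k}\binom{q+k}{k}}\,\Pr[\cA^{H}\ \text{wins}],
\]
where the expectation is over a uniform schedule. The base case $k=1$ is the single-point measure-and-reprogram lemma with the constant $2q+1$ replaced by $2\binom{q+1}{1}$, an $O(1)$ rearrangement. The inductive step peels off one intercept, invokes the hypothesis for $k-1$ against the oracle already reprogrammed at the first measured point, and glues the two parts via the hockey-stick identity
\[
\sum_{j=0}^{q} \binom{j+k-1}{k-1} \;=\; \binom{q+k}{k}.
\]
A classical union bound over the first intercept position would have produced $(q+1)\binom{q+k-1}{k-1}$ in the denominator, whereas the coherent schedule lets a single Cauchy--Schwarz collapse this sum to $\binom{q+k}{k}$. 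The $2^{2k}$ accounts for the $k$ binary ``measure-or-reprogram'' coin flips. The second stage derandomizes the schedule: fix the best $\vj^{\ast}$ and apply Cauchy--Schwarz once more to pass from expected amplitude to probability, producing the square and yielding the denominator $2^{2k}\binom{q+k}{k}^{2}$ claimed in the theorem.

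The main obstacle is precisely the inductive glue in the coherent reprogramming lemma. After the first measurement, $\cA$'s state is an unnormalized, correlated snapshot whose reprogramming point is itself random, so the remaining $k-1$ intercepts cannot be treated as independent repetitions of the $k=1$ argument; a naive product would incur the prohibitive $(q+1)^k$ factor of the Yamakawa--Zhandry bound. The coherent schedule register resolves this by keeping amplitudes --- not probabilities --- linear across the split, which is what allows Cauchy--Schwarz to be applied once globally rather than once per intercept, collapsing the $k$-fold sum to a single binomial coefficient. Once this lemma is in place, constructing $\cB$ is mechanical: simulate $\cA$ with $H_0$, intercept according to the best fixed schedule, forward the final answer to $\cC$, and the claimed query complexity and success-probability ratio follow.
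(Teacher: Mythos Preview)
Your plan has the right skeleton---intercept $k$ of $\cA$'s queries, reprogram, and relate via Cauchy--Schwarz---but it misidentifies what must be kept coherent, and the two-stage accounting does not close. The paper's gain over the $(2q+1)^{2k}$ of Don \emph{et al.} comes not from a coherent \emph{schedule} but from coherently \emph{copying the query input} (rather than measuring it) into a control register that stores an \emph{unordered set} of $(x,G(x))$ pairs. That set structure is the crux: for a fixed choice of intercept positions $\vec v$ and bits $\vec b$, the $k!$ orderings $\sigma$ in which the target inputs $\xout$ could have been recorded all land on the \emph{same} control-register basis state, so that register factors out and the simulator's success probability for $(\vec v,\vec b)$ equals $\bigl\|\Pi\,\ket{\phi_{\vec v,\vec b}}\bigr\|^2$ with $\ket{\phi_{\vec v,\vec b}}=\sum_\sigma\ket{\phi_{\vec v,\vec b,\sigma}}$. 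Your simulator instead ``measures $\cA$'s query register to obtain an input $x_t$'' at each intercept; this collapses the superposition over orderings $\sigma$ and forces you back to the per-intercept loss of prior work. Holding the schedule $\vj$ in superposition does not repair this, because the classical query-register measurement decoheres the schedule register along with it (and then your $k$ ``real quantum queries'' to $H$ are on classical inputs anyway).

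Separately, the first-stage inequality you state, $\Exp_{\vj}\bigl[\Pr[\cB_{\vj}\text{ wins}]\bigr]\ge 2^{-2k}\binom{q+k}{k}^{-1}\Pr[\cA\text{ wins}]$, is already \emph{stronger} than the theorem: the left side is exactly $\Pr[\cB\text{ wins}]$ once $\cB$ samples $\vj$ uniformly, so if it held no second stage would be needed. ``Derandomize then Cauchy--Schwarz'' cannot \emph{cost} a factor $\binom{q+k}{k}$---fixing the best $\vj^\ast$ can only help, and both sides are already probabilities, not amplitudes. In the paper the two copies of $2^k\binom{q+k}{k}$ arise for distinct reasons: one from Cauchy--Schwarz applied to the direct (non-inductive) decomposition $\ket{\phi}=\sum_{\vec v,\vec b}\ket{\phi_{\vec v,\vec b}}$ into $2^k\binom{q+k}{k}$ summands, and one from the simulator's uniform average over those same $2^k\binom{q+k}{k}$ choices. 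The hockey-stick identity is a reasonable heuristic for why $\binom{q+k}{k}$ is the right count, but it does not by itself supply an amplitude-level inductive proof; the step you describe as ``keeping amplitudes linear across the split'' is precisely the set-valued coherent recording that your construction omits.
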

\begin{remark}
    Comparing to the lifting theorem in \cite{YZ21}, we have a better loss $2^{2k} \binom{q+k}{k}^2$, whereas it is $(2q+1)^k$ in their work. Since the algorithm often makes more queries than the challenger $q \gg k$, it is roughly a $(k!)^2$ save. In \cite{YZ21}, they are able to reduce a $q$-quantum-query algorithm to a $k$-classical-query algorithm; whereas in this work, we only reduce the number of the queries, with the algorithm $\Bs$ still making quantum queries. Nonetheless, it does not affect the applications and improvement we have in this work. {Our framework handles the case where the challenge is independent of the oracle (similar to the results in \cite{YZ21}). We leave the case of oracle-dependent challenges as an interesting open question.}
\end{remark}

\medskip
\noindent{\bf Coherent Reprogramming.}
At the core of our main lifting result above lies a new framework for quantum reprogramming which we call \emph{coherent reprogramming}. This new 
framework has the following 
advantages:
\begin{enumerate}
    \item It simplifies the proofs and frameworks of existing quantum reprogramming results;
    \item It yields improved tighter reprogramming bounds; and
    \item It implies in a straightforward manner several applications in quantum query complexity and cryptography.
\end{enumerate}

In order to present our main coherent reprogramming result, we first need to introduce 
a few notions. For an oracle $H$ we  call $H_{x, y}$ the reprogrammed oracle that behaves almost like the original function $H$, with the only difference that its value on input $x$ will be $y$. Similarly, we 
define the reprogrammed oracle on $k$ inputs $\vec{x} = (x_1, ..., x_k)$ and $k$ corresponding outputs $\vec{y} = (y_1, ..., y_k)$, denoted by $H_{\vec{x}, \vec{y}}$, as the original function $H$ with the only difference that for every input $x_i$ in $\vec{x}$, the corresponding image will be $y_i$ in $\vec{y}$.

\begin{theorem} [Coherent Reprogramming (Informal)]
    \label{thm:reprogramming_informal}
    Let $H, G$ be two random oracles. 
    Let $\As$ be any $q$-quantum query algorithm to the oracle $H$, and let $\xout = (x_1, ..., x_k) \in X^k$ be any $k$-vector of inputs and $\yout = (y_1, ..., y_k) = (G(x_1), ..., G(x_k))$.
    Then there exists a simulator algorithm $\Sim$ that given oracle access to $H, G$ and $\cA$, simulates the output of $\cA$ having oracle access to $H_{\xout, \yout}$ (the  reprogrammed version of $H$) with probability: 
    \begin{align*}
     \Pr_{H, G}\left[\Sim \text{ outputs correct } (\vec{x}, \vec{y})  \right]
       \geq  \frac{1}{{2^{2k}{{q + k} \choose k}^2}}  \cdot \Pr_{H, G}\left[\cA^{H_{\xout, \yout}} \text{ outputs correct } (\vec{x}, \vec{y}) \right]. 
    \end{align*}
    where "correct" 
    is defined with respect to some predicate that can depend on the reprogrammed oracle $H_{\xout, \yout}$.
\end{theorem}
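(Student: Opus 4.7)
The plan is to construct $\Sim$ by interleaving $\cA$'s $q$ oracle queries with $k$ coherently designed measure-and-reprogram operations, and to charge the loss to the number of ways these operations can be scheduled. The starting point is the Don--Fehr--Majenz--Schaffner measure-and-reprogram paradigm, which samples a single query index $i \in \{0,\ldots,q\}$ at which $\cA$'s input register is measured and then swaps $H$ for $H_{x,y}$. Iterating that argument $k$ times naively yields the $(2q+1)^{2k}$ loss of Yamakawa--Zhandry. The improvement pursued here is to let the schedules collide, so that the valid sampling space becomes the weakly increasing sequences $0 \leq i_1 \leq \cdots \leq i_k \leq q$, of which there are $\binom{q+k}{k}$ rather than $(q+1)^k$.

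Formally, I would define $\Sim$ on input $(\vec{x},\vec{y})$ and oracle access to $H$ and $G$ as follows. $\Sim$ samples a weakly increasing schedule $(i_1,\ldots,i_k)$ uniformly, together with auxiliary bit-vectors $(b_1,\ldots,b_k),(c_1,\ldots,c_k) \in \{0,1\}^k$ encoding the ``measure-or-not'' and ``reprogram-before-or-after'' choices. It then runs $\cA$ on $H$, and immediately before $\cA$'s intended $i_j$-th query it applies a small operation controlled by $(b_j,c_j)$: either project $\cA$'s input register onto $\ket{x_j}$ and switch the oracle on coordinate $x_j$ from $H(x_j)$ to $y_j$, or let the query pass through unchanged. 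The simulator outputs $\cA$'s final state together with $(\vec{x},\vec{y})$ read off from the sampled points, using $G$ to evaluate the ``correct'' predicate.

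Next, the core analysis proceeds by induction on $k$, with the base case $k=1$ isolated as a standalone coherent reprogramming lemma giving loss $4(q+1)^2$. For the base case I would write the evolution $U_q \Os_{H_{x,y}} U_{q-1}\cdots U_1 \Os_{H_{x,y}} U_0$, expand $\Os_{H_{x,y}} = \Os_H + \Delta_{x,y}$, and group trajectories by the first query index at which the $\Delta_{x,y}$ branch is taken. This decomposes the successful $\cA^{H_{x,y}}$-state into $q+1$ coherent pieces, and a single Cauchy--Schwarz step shows that uniformly sampling the position recovers one piece with probability at least $1/(q+1)^2$; the extra factor of $4$ absorbs the two auxiliary bit choices. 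The inductive step peels off the first reprogrammed point via the $k=1$ case and then applies the hypothesis to the residual $(k-1)$-reprogramming problem on the remaining queries; the multinomial identity $\binom{q+k}{k} = \sum_{i=0}^{q}\binom{i+k-1}{k-1}$ is exactly what accumulates the $\binom{q+k}{k}$ factor through the recursion.

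The main obstacle will be the $k=1$ base case, where the tighter $(q+1)^2$ factor (as opposed to the $(2q+1)^2$ of prior work) hinges on replacing a triangle-inequality step in the standard measure-and-reprogram argument by a coherent decomposition that preserves sharpness of the squared-norm bound. In particular, folding the $b_j$-bit projection into the unitary evolution without introducing extra interference loss is the technical heart of the improvement, and is precisely where the word ``coherent'' is earned. A secondary subtlety is that the predicate defining ``correct'' may depend on $H_{\xout,\yout}$, but this is handled cleanly because $\Sim$ has oracle access to $G$ and can evaluate the reprogrammed predicate consistently with the real execution.
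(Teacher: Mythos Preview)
Your proposal misidentifies where the $(k!)^2$ saving over \cite{YZ21} comes from, and the simulator you describe is not actually coherent in the relevant sense. In the paper, $\Sim$ does \emph{not} project the query register onto $\ket{x_j}$: that is a measurement, and moreover would require knowing $\xout$ in advance---but $\Sim$ receives only oracle access to $H,G,\cA$, not $(\vec{x},\vec{y})$ as input. Instead, $\Sim$ coherently appends the current query input to a separate \emph{set-valued} control register $\mathcal{R}$ and answers the query via a controlled-reprogrammed oracle $O_H^{\sf ctrl}$ that reads both the query register and $\mathcal{R}$; nothing is collapsed until the very end. The paper also appends $k$ extra queries (so there are $q+k$ in total) and samples $k$ \emph{distinct} positions $\vec{v}\subset[q+k]$; your ``weakly increasing with collisions'' picture happens to have the right cardinality $\binom{q+k}{k}$ but not the right mechanism.

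The set-valued register is the heart of the argument, and your inductive plan cannot replace it. The direct decomposition of $|\phi^{H_{\xout,\yout}}_{q+k}\rangle$ produces, for each $(\vec{v},\vec{b})$, a sum $\sum_{\sigma\in S_k}|\phi_{\vec{v},\vec{b},\sigma}\rangle$ over the $k!$ orders in which the $k$ points could be reprogrammed. Because $\mathcal{R}$ stores an unordered set, all $k!$ orderings carry the \emph{same} control-register content, so they add coherently in the simulator's state and the simulator recovers the full norm $\|G_{\xout}\Pi_V^{H_{\xout,\yout}}|\phi_{\vec{v},\vec{b}}\rangle\|^2$, not just one $\sigma$-branch. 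Cauchy--Schwarz is then applied once over the $2^k\binom{q+k}{k}$ choices of $(\vec{v},\vec{b})$---not over $(\vec{v},\vec{b},\sigma)$---and this is precisely the $(k!)^2$ improvement. Your induction cannot reproduce this: peeling off the first point via the $k=1$ lemma \emph{measures} it, collapsing the state and destroying the cross-$\sigma$ interference, so subsequent steps are bounding $\sum_\sigma\|\cdot\|^2$ rather than $\|\sum_\sigma\cdot\|^2$; the hockey-stick identity counts schedules correctly but cannot recover the lost coherence. The paper's proof is a single direct decomposition, not an induction, and ``coherent'' refers to the simulator's architecture (set-valued control register, no intermediate measurements), not to a sharper Cauchy--Schwarz in the $k=1$ base case.
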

\begin{remark}
    Similar to the comparison between \Cref{thm:intro_main1} and \cite{YZ21}, the second theorem improves the factor $(2q+1)^k$ in \cite{DFM20} to $2^{2k} \binom{q+k}{k}^{2}$. Our simulator does not measure and reprogram directly, but does everything coherently (or in superposition). 
\end{remark}

Next, we show the applications of our lifting theorem 
in query complexity and cryptography.

\medskip
\noindent\textbf{Quantum Lifting Theorem with Classical Reasoning.} 
A \multigame between a challenger and an adversary is defined as follows. The adversary receives $k$ different challenges from the challenger, and at the end of their interaction, the adversary needs to respond with $k$ outputs. If the $k$ outputs (taken together) satisfy some relation $R$ specified by the game, we say the adversary wins the \multigame.
The goal of the \emph{lifting theorem} is to establish the hardness of solving the \multigame by any general quantum adversary, with only simple classical reasoning. For an arbitrary $k$-ary relation $R$, let $\sym_k$ be the symmetric group on $[k]$ and we define: 
\begin{align*}
p(R): = \Pr[\exists \pi\in \sym_k \ | \ (y_{\pi(1)}, y_{\pi(2)}, ..., y_{\pi(k)}) \in R  : (y_1, ..., y_k) \xleftarrow{\$} Y^k] \,.
\end{align*}
Note that $p(R)$ is a quantity that only depends on the game itself, and can be calculated with only classical reasoning. 

\begin{theorem}[Quantum Lifting Theorem with Classical Reasoning (Informal)]
    For any quantum algorithm $\cA$ equipped with $q$ quantum queries to a random oracle $H : X \rightarrow Y$, $\cA$'s success probability to solve the \multigame as specified by the winning relation $R$, is bounded by:
    \begin{align*}
        \Pr[\cA \text{ wins \multigame}] \leq {2^{2k}{{q + k} \choose k}^2} \cdot p(R) \, .
    \end{align*}
\end{theorem}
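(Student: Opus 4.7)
The plan is to combine the lifting theorem (\Cref{thm:intro_main1}) with a direct classical-style bound on the success probability of any $k$-quantum-query adversary in the multi-output $k$-search game. First I would cast the multi-output $k$-search game in the setting of \Cref{thm:intro_main1}: the challenger, on receiving the adversary's outputs $(x_1,\ldots,x_k)$, makes exactly $k$ classical queries to $H$ to evaluate $(H(x_1),\ldots,H(x_k))$ and checks membership in $R$. Applying \Cref{thm:intro_main1} then produces a $k$-quantum-query adversary $\cB$ with
$$\Pr[\cB \text{ wins}] \;\geq\; \frac{1}{2^{2k}\binom{q+k}{k}^2}\; \Pr[\cA \text{ wins}].$$

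The remaining task is to show that $\Pr[\cB \text{ wins}] \leq p(R)$ for any $k$-quantum-query adversary $\cB$. I would argue this using essentially classical reasoning. Using Zhandry's compressed oracle (or equivalently the database representation implicit in the coherent reprogramming of \Cref{thm:reprogramming_informal}), after $k$ quantum queries the algorithm's state is entangled with a database $D$ of at most $k$ input--output pairs, and the value $H(x)$ at any point $x \notin D$ is uniform and independent of $\cB$'s view. Splitting on whether the output tuple $(x_1,\ldots,x_k)$ lies entirely in the support of $D$: in the first case, $(H(x_1),\ldots,H(x_k))$ is a reordering of at most $k$ independent uniform samples from $Y$, and the winning event becomes ``some permutation of $k$ uniform elements of $Y$ lies in $R$'', which has probability exactly $p(R)$; in the second case, a fresh output point contributes a fully independent uniform image, and the probability is again bounded by $p(R)$.

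Combining the two inequalities yields $\Pr[\cA \text{ wins}] \leq 2^{2k}\binom{q+k}{k}^2 \cdot p(R)$, as required.

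The main obstacle will be making the $k$-quantum-query bound rigorous. The subtlety is that $\cB$'s output is allowed to depend coherently on the oracle, so one must rule out that $\cB$ synthesizes a winning tuple using more than the $k$ ``slots'' of information its database records. The compressed oracle resolves this by pinning on-database values and forcing off-database values to remain uniform, while the symmetric-group relaxation built into $p(R)$ precisely accounts for $\cB$'s freedom to permute its queried values into any order in the output tuple. Handling edge cases---such as collisions among the $x_i$'s or partial overlap with the database---is where the care of the proof will concentrate.
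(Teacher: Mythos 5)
The first half of your plan matches the paper: the game is cast so that the challenger makes $k$ classical queries, and the lifting theorem produces a $k$-quantum-query adversary $\cB$ with at most a $2^{2k}\binom{q+k}{k}^2$ loss. The gap is in your second step. The claim that \emph{any} $k$-quantum-query adversary wins with probability at most $p(R)$ is false: take $k=1$ and $R=\{0\}$, so $p(R)=1/N$; a single Grover iteration finds a preimage of $0$ with probability roughly $9/N$. The compressed-oracle argument you sketch cannot close this, because conditioned on the adversary's final state the on-database images are \emph{not} uniform --- the adversary's amplitude is correlated with the database contents, which is exactly what quantum search exploits. Zhandry-style arguments only recover a bound of the form $\sqrt{p}\le\sqrt{p'}+\sqrt{k/N}$, with additional loss, not the clean $p(R)$ you need. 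You flag this subtlety yourself (``$\cB$'s output is allowed to depend coherently on the oracle'') but the resolution you propose does not exist at this level of generality.

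The paper avoids this by proving a \emph{strengthened} lifting theorem (Theorem~\ref{lem:quantum_lifting_improved}) rather than bounding an arbitrary $k$-query adversary. The simulator $\Sim$ from the coherent measure-and-reprogram construction records its reprogramming points in a control register whose images come from an \emph{independent} uniformly random oracle $G$ (Lemma~\ref{lemma:uniform_images}), and the lower bound holds for the conjunction of the event that $\cB$ wins \emph{and} that the challenger's verification queries $L_{\cC}$ coincide with the recorded list $L_{\cB}$. On that event the challenger verifies a permutation of $k$ genuinely uniform, adversary-independent images, and only then is the probability bounded by $p(R)$. So the classical reasoning applies to this specific, structurally constrained simulator, not to all $k$-query algorithms; your proof needs this extra uniformity-and-matching property, which does not follow from \Cref{thm:intro_main1} alone.
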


Our lifting theorem translates into
the following quantum hardness results for our applications in query complexity and cryptography.

\medskip
\noindent\textbf{Direct Product Theorem.} 
We give the first direct product theorem (DPT) in the average case (in the QROM). Previously, only worst-case quantum DPTs were known \cite{sherstov2011strong,lee2013strong} and were proof-method dependent; until recently, Dong et al.~\cite{dong2024salting} shows the first average-case quantum DPTs for some problems in the QROM. While they are non-tight, our DPTs works for all games in the QROM and proof-method independent.

Our direct product theorem establishes the hardness of solving $g$ independent instances (each instance is associated with an independent oracle) of a game $\gamemath$ given a total of $g \cdot q$ quantum queries: 
\begin{theorem}[Direct Product Theorem]
    For any quantum algorithm $\As$ equipped with $g \cdot q$ quantum queries, $\As$'s success probability to solve the Direct Product game $\gamemath^{\otimes g}$ with the underlying $\gamemath$ specified by the winning relation $R$, is bounded by
\begin{align*}
    \Pr[\As & \text{ wins } \gamemath^{\otimes g}] \leq 
     \left( {2^{2k}{{q + k} \choose k}^2}  p(R) \right)^g.
\end{align*}
\end{theorem}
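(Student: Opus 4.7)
The plan is to derive the direct product bound as a multiplicative composition of $g$ applications of the Quantum Lifting Theorem with Classical Reasoning, one per instance, via induction on $g$; the base case $g=1$ is exactly the single-instance lifting statement.

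For the inductive step, I pick an adversary $\As$ with $gq$ quantum queries interacting with $g$ independent challengers using independent oracles $H_1,\ldots,H_g$, and group $\As$ together with the $g-1$ challengers for instances $2,\ldots,g$ (and their internally simulated oracles) into a single ``super-adversary'' $\widetilde{\As}$ that plays only against the instance-$1$ challenger. Applying the coherent reprogramming theorem (\Cref{thm:reprogramming_informal}) to $\widetilde{\As}$ yields a simulator $\Sim_1$ which, with probability at least $\Pr[\As \text{ wins } \gamemath^{\otimes g}]/(2^{2k}\binom{q+k}{k}^2)$, extracts a correct instance-$1$ tuple while the remaining state continues to play $\gamemath^{\otimes(g-1)}$ against the untouched $H_2,\ldots,H_g$. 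Since $\Sim_1$ accesses $H_1$ with only $k$ effectively-classical queries and $H_1$ is independent of the other oracles, the joint success event factorizes: the classical-reasoning step bounds the probability that $\Sim_1$'s instance-$1$ tuple satisfies $R$ by $p(R)$, and the probability that the residual $(g-1)q$-query algorithm wins $\gamemath^{\otimes(g-1)}$ is bounded by $(2^{2k}\binom{q+k}{k}^2 p(R))^{g-1}$ by the inductive hypothesis. Multiplying these three factors gives exactly $(2^{2k}\binom{q+k}{k}^2 p(R))^g$.

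The main obstacle is accounting for the ``effective'' number of queries $\widetilde{\As}$ directs at $H_1$ alone, since its $gq$ quantum queries can be in superposition over all $g$ oracles, so a naive count bounds the instance-$1$ loss factor only by $2^{2k}\binom{gq+k}{k}^2$. To sharpen this to $2^{2k}\binom{q+k}{k}^2$, I would exploit the concavity of $\log\binom{x+k}{k}$ in $x$ (its second derivative equals $-\sum_{j=1}^k (x+j)^{-2}<0$): symmetrizing over the $g$ instances --- equivalently, averaging the sequential-extraction bound over all orderings of instance-indices --- and invoking Jensen's inequality reduces to the balanced allocation $q_i=q$ and yields $\prod_{i=1}^g \binom{q_i+k}{k}\le \binom{q+k}{k}^g$ whenever $\sum_i q_i = gq$. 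Making this symmetrization rigorous in the quantum setting, where per-oracle query counts are not classically well-defined, is the key technical step that ties the induction together.
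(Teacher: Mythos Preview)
Your inductive peel-off strategy is \emph{not} how the paper proceeds, and the obstacle you flag is real and, as you present it, unresolved.

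The paper applies the lifting theorem (specifically \Cref{lem:quantum_lifting_improved}, the ``uniform images'' version) \emph{once} to the entire game $\gamemath^{\otimes g}$, viewed as a single multi-output search game with $gk$ outputs and $gq$ queries. This produces a $gk$-query simulator $\cB$ with $L_{\cB}=L_{\cC}$ and loss factor $2^{2gk}\binom{gq+gk}{gk}^{2}$. The factorization then happens entirely on the \emph{classical} side: because the definition of $\gamemath^{\otimes g}$ requires every entry of $\vec{x}_i$ to begin with the tag $i$, any permutation of the $gk$ uniformly random recorded images that makes all $g$ relations hold must decompose as a product of $g$ permutations in $\sym_k$, so the classical-reasoning probability is exactly $p(R)^g$. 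A combinatorial comparison of $2^{2gk}\binom{gq+gk}{gk}^{2}$ with $\bigl(2^{2k}\binom{q+k}{k}^{2}\bigr)^{g}$ finishes. No per-instance query accounting, no induction, no Jensen.

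Your approach, by contrast, is stuck precisely where you say it is. The coherent reprogramming theorem sees only the \emph{total} query count of $\widetilde{\As}$, namely $gq$; there is no mechanism in the theorem for charging fewer queries because ``most of them are to other oracles.'' So the instance-$1$ extraction costs $2^{2k}\binom{gq+k}{k}^{2}$, not $2^{2k}\binom{q+k}{k}^{2}$. The symmetrize-and-Jensen idea does not rescue this: there is no classical random variable ``queries spent on $H_i$'' to average over, and the simulator of \Cref{thm:coherent_measure_and_reprogram} does not hand you back a residual adversary with a reduced query budget for the remaining $g-1$ oracles --- the queries to $H_1$ are still consumed by the residual computation inside $\Sim_1$. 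Hence the inductive hypothesis cannot be invoked with parameter $(g-1)q$. The paper's one-shot route sidesteps all of this by pushing the product structure into the purely combinatorial quantity $p(R^{\otimes g})$, where the instance tags force the block decomposition for free.
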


\medskip
\noindent\textbf{Non-uniform Security of Salting.} 
The above theorem directly implies non-uniform security of salting. 
Non-uniform attacks allow a malicious party to perform heavy computation offline and attack a protocol much more efficiently, using the information in the offline stage. 
Salting is a generic method that prevent non-uniform attacks against hash functions. Chung et al.~\cite{chung2020tight} shows that ``salting generically defeats quantum preprocessing attacks''; they show that if a game in the QROM is $\epsilon(q)$ secure, the salted game with salt space $[K]$ is $\epsilon(q) + \frac{Sq}{K}$ secure against a quantum adversary with $S$-bit of advice. Their bound is non-tight, since when the underlying game is collision-finding, the tight non-uniform security should be $\epsilon(q) + \frac{S}{K}$. Improving the additive factor is an interesting open question and until recently \cite{dong2024salting} is able to answer this question affirmatively for a limited collection of games.

Using our direct product theorem, we show:
\begin{theorem}[Another ``Salting Defeats Quantum Preprocessing'']
For any non-uniform quantum algorithm $\As$ equipped with $q$ quantum queries and $S$-bit of classical advice, $\As$'s success probability to solve the salted game $\gamemath_s$ with the underlying $\gamemath$ specified by the winning relation $R$, is bounded by
\begin{align*}
    \Pr[\As & \text{ wins } \gamemath_s] \leq 
      4 \cdot \left( {2^{2k}{{q + k} \choose k}^2}  p(R) + \frac{S}{K} \right).
\end{align*}
\end{theorem}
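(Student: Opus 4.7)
The plan is to derive this bound as a direct application of the direct product theorem (DPT) above, combined with a standard ``salting defeats preprocessing'' reduction of the kind pioneered by Chung--Liao--Qian. The improvement over prior bounds such as $\epsilon(q) + \tfrac{Sq}{K}$ comes entirely from plugging our sharper DPT bound into the existing template.

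\textbf{Step 1: Setup.} Let $(P,\As)$ be any $(S,q)$-advice adversary achieving success probability $\eta := \Pr[\As \text{ wins } \gamemath_s]$. For a fixed advice string $z \in \{0,1\}^S$ and a fixed salt $s \in [K]$, write $W_{s}(z,H)$ for the indicator that $\As(z,s,H)$ wins on the salted oracle $H$; in the preprocessing attack the advice is $z = P(H)$, so $\eta = \Exp_{H,s}[W_s(P(H),H)]$.

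\textbf{Step 2: Intended reduction to DPT.} The natural reduction is to build a no-advice attacker for the $g$-fold salted game $\gamemath_s^{\otimes g}$ from $(P,\As)$: given $g$ independent salted oracles $H^{(1)},\dots,H^{(g)}$ and random salts $s_1,\dots,s_g$, compute $z^{(i)} = P(H^{(i)})$ per instance and invoke $\As(z^{(i)},s_i,\cdot)$. On fresh random oracles this wins all $g$ instances with probability $\eta^{g}$ by independence, and our direct product theorem applied to the salted game bounds this by $\epsilon_{0}^{g}$, where $\epsilon_{0} = 2^{2k}\binom{q+k}{k}^{2} p(R)$.

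\textbf{Step 3: Handling preprocessing via presampling.} The above reduction cannot be executed verbatim: the DPT attacker only has quantum query access to the $H^{(i)}$, while $P$ needs their full truth tables. To bypass this, invoke the standard quantum presampling / bit-fixing transformation (Chung--Liao--Qian and its refinements): any $S$-bit preprocessing may be replaced, up to a controlled statistical loss, by one that fixes at most $O(S)$ coordinates of the oracle and leaves the rest uniform. In the salted model these fixed coordinates touch at most $O(S)$ salts, so a uniformly random salt $s$ collides with a fixed position with probability at most $S/K$; on any other salt the advice carries no information about $H(s,\cdot)$ and the attack reduces to a no-advice one.

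\textbf{Step 4: Combining the cases.} Decompose $\eta$ according to whether the random salt hits a fixed coordinate or not. The ``hit'' contribution is at most $S/K$ (trivial win bound). On the complement, the online algorithm is effectively advice-free, and the single-instance salted DPT bound gives $\epsilon_{0}$. A two-sided Markov argument that thresholds on the conditional success $\eta(z,H)\geq \eta/2$ both on advice and on salt introduces an overall multiplicative factor of $4$, producing
\begin{align*}
    \Pr[\As \text{ wins } \gamemath_s] \leq 4 \cdot \left( 2^{2k}\binom{q+k}{k}^{2} p(R) + \frac{S}{K} \right),
\end{align*}
which is exactly the claimed bound.

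\textbf{Main obstacle.} The principal technical hurdle is Step 3: the passage from an arbitrary preprocessing to a bit-fixing surrogate is genuinely quantum and cannot be done by purely classical counting, because quantum advice/preprocessing can encode subtle correlations with the oracle. Fortunately, the required quantum presampling lemma is available off-the-shelf from prior work; once it is plugged in, Steps 1, 2, and 4 are routine, and the entire improvement over the Chung--Liao--Qian bound comes from replacing their $(2q+1)^{2k}$ factor by our sharper $2^{2k}\binom{q+k}{k}^{2}$ factor throughout.
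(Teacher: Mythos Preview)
Your Step~3 contains a genuine gap that the rest of the argument cannot recover from. You assert that ``any $S$-bit preprocessing may be replaced, up to a controlled statistical loss, by one that fixes at most $O(S)$ coordinates of the oracle.'' This is not a known presampling statement and is in fact too strong: the standard bit-fixing/presampling lemmas (Coretti--Dodis--Guo--Steinberger classically, and their quantum adaptations) require fixing $P$ positions with $P$ scaling in \emph{both} $S$ and the number of online queries, typically $P = \Theta(Sq/\varepsilon)$ or similar. That dependence on $q$ is precisely the source of the prior $Sq/K$ additive term you are trying to improve; if presampling with only $O(S)$ fixed points were available off the shelf, the $S/K$ bound would have been immediate long ago, and indeed such a lemma would contradict known time--memory tradeoff lower bounds (e.g.\ for function inversion). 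Also, the work you cite for presampling is the wrong one: the multi-instance reduction of \cite{chung2020tight} is not a presampling argument at all.

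The paper does \emph{not} go through presampling. Its route is: (i) invoke the multi-instance reduction of \cite{chung2020tight}, namely $\epsilon_{\gamemath_s}^C(q,S) \le 4\bigl[\epsilon_{\gamemath_{s,\mathsf{MIS}}^{\otimes S}}(Sq)\bigr]^{1/S}$, reducing non-uniform security to uniform security of $S$ instances over the \emph{same} salted oracle; (ii) invoke a lemma from \cite{dong2024salting} that passes from the multi-instance salted game to the direct-product salted game, $\epsilon_{\gamemath_{s,\mathsf{MIS}}^{\otimes S}}(Sq)^{1/S} \le \epsilon_{\gamemath_s^{\otimes S}}(Sq)^{1/S} + S/K$, where the $S/K$ term accounts for collisions among the $S$ independently drawn salts; and (iii) bound $\epsilon_{\gamemath_s^{\otimes S}}(Sq)$ by $\bigl(2^{2k}\binom{q+k}{k}^{2}p(R)\bigr)^{S}$ using the new direct product theorem. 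Taking the $S$-th root gives the claim. The improvement from $Sq/K$ to $S/K$ therefore does not come from any sharper presampling, but from the fact that the new DPT, unlike the bound inherited from \cite{YZ21}, does not degrade when $Sq$ total queries are spread across $S$ instances: $2^{2k}\binom{Sq+Sk}{Sk}^{2/S}$ stays of order $2^{2k}\binom{q+k}{k}^{2}$, whereas $(2Sq+1)^{2k}$ does not.
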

Our bound is incomparable to that in \cite{chung2020tight}. We are able to improve the additive term from $\frac{Sq}{K}$ to $\frac{S}{K}$, while only able to give an upper bound for $\epsilon(q)$. Even our bound is non-tight in general, it still confirms (on a high level) that the help from classical advice only comes from the following:
\begin{itemize}
    \item using $S$-bit advice to store solutions for $S$ random salts;
    \item if the challenge salt matches with the random salts (with probability $\frac{S}{K}$), the attack succeeds; otherwise, proceed the attack as if there is no advice. 
\end{itemize}

\medskip
\noindent{\bf Non-Uniform Security.}
By combining our lifting theorem with the results 
by Chung {\em et al.}~\cite{chung2020tight}, we derive the following results concerning the security (hardness) against non-uniform quantum adversaries with classical advice, for a broader class of games. 

\begin{lemma}[Security against Quantum Non-Uniform Adversaries (Informal)]
Let $\gamemath$ be any classically verifiable search game specified by the winning relation $R$. Let $R^{\otimes S}$ be the winning relation of the multi-instance game of $\gamemath$. 
Any quantum non-uniform algorithm $\cA$ equipped with $q$ quantum queries and $S$ classical bits of advice, can win 
$\gamemath$ with probability at most:
\begin{align*}
   \Pr[\cA \text{ wins } \gamemath] \leq 4 \cdot 2^{2k} {{S(q+k)} \choose S k}^{\frac{2}{S}} \cdot  p(R^{\otimes S})^{1/S} \, .
\end{align*}
\end{lemma}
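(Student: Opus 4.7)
The plan is to combine our lifting theorem (\Cref{thm:intro_main1}) with the non-uniform-to-multi-instance reduction of Chung et al.~\cite{chung2020tight}. Their reduction converts any non-uniform quantum algorithm $\cA$ making $q$ quantum queries and using $S$ classical bits of advice into a uniform quantum algorithm $\cB$ making $Sq$ quantum queries that wins the $S$-fold product game $\gamemath^{\otimes S}$ (in which each of the $S$ sub-challengers uses an independent random oracle) and satisfies
\begin{align*}
\Pr[\cA \text{ wins } \gamemath] \leq \bigl(4 \cdot \Pr[\cB \text{ wins } \gamemath^{\otimes S}]\bigr)^{1/S}.
\end{align*}

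To control the right-hand side, I would view $\gamemath^{\otimes S}$ as a single search game: the $S$ independent random oracles $H_1,\ldots,H_S$ are merged into one oracle $H:[S]\times X\to Y$ with $H(i,x):=H_i(x)$; the aggregated challenger (the $S$ sub-challengers executed in parallel) then makes a total of $Sk$ classical queries to $H$; and the winning relation is the $(Sk)$-ary relation $R^{\otimes S}$ that asks the outputs to split into $S$ tuples each lying in $R$. Applying \Cref{thm:intro_main1} (equivalently, its classical-reasoning corollary) to this single game with parameters $k'=Sk$ and $q'=Sq$ gives
\begin{align*}
\Pr[\cB \text{ wins } \gamemath^{\otimes S}] \leq 2^{2Sk}\binom{Sq+Sk}{Sk}^{2}\cdot p(R^{\otimes S}) = 2^{2Sk}\binom{S(q+k)}{Sk}^{2}\cdot p(R^{\otimes S}).
\end{align*}
Substituting this back into the previous display, taking the $S$-th root, and using $4^{1/S}\leq 4$ for all $S\geq 1$ yields the claimed bound $4\cdot 2^{2k}\binom{S(q+k)}{Sk}^{2/S}\cdot p(R^{\otimes S})^{1/S}$.

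The only step I anticipate requiring care is the formal invocation of the Chung et al.~reduction in a form that composes cleanly with \Cref{thm:intro_main1}: one must verify that the uniform adversary $\cB$ produced by their reduction is a legitimate adversary in the merged $(Sk)$-classical-query search game described above, so that our lifting theorem applies to $\gamemath^{\otimes S}$ verbatim. This reformulation mirrors the passage from $\gamemath$ to $\gamemath^{\otimes g}$ already carried out in our direct product theorem, so it should go through with standard bookkeeping of query counts and oracle domains rather than any new conceptual ingredient.
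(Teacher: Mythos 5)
Your overall strategy is exactly the paper's: invoke the Chung et al.\ reduction to pass from an $S$-bit-advice adversary to a uniform adversary against an $S$-fold game with $Sq$ queries, view that $S$-fold game as a single multi-output $(Sk)$-search game, apply the lifting theorem with $k'=Sk$, $q'=Sq$, and take $S$-th roots (the arithmetic $\binom{Sq+Sk}{Sk}=\binom{S(q+k)}{Sk}$ and $(2^{2Sk})^{1/S}=2^{2k}$ is right). The paper's proof is a one-line combination of Lemma~\ref{lemma:salting_classical} and Theorem~\ref{thm:direct_product_strong}, so you have fleshed out essentially the intended argument.

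There is, however, one concrete inaccuracy: the Chung et al.\ reduction does \emph{not} produce an adversary against the direct product $\gamemath^{\otimes S}$ with $S$ independent random oracles, as you assert. It produces an adversary against the \emph{multi-instance} game $\gamemath_{\sf MIS}^{\otimes S}$, in which all $S$ challenges are posed relative to one and the same oracle $H$ --- this is forced by the nature of the argument, since the $S$ bits of advice concern a single $H$ and must be ``paid for'' by solving $S$ challenges over that same $H$. Consequently your oracle-merging step $H(i,x):=H_i(x)$ is not needed (and not available), and the relation that appears in the final bound is the multi-instance relation $R_{\sf MIS}^{\otimes S}$ over $Y^{Sk}$ --- which, unlike the direct-product relation, allows the permutation in the definition of $p(\cdot)$ to mix images across instances --- exactly as in the lemma statement, where $R^{\otimes S}$ is explicitly declared to be the relation of the multi-instance game. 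The repair is immediate: apply Theorem~\ref{thm:direct_product_strong} directly to $\gamemath_{\sf MIS}^{\otimes S}$, which is already a single multi-output $(Sk)$-search game over the single oracle $H$ with winning relation $R_{\sf MIS}^{\otimes S}$, and the rest of your computation goes through unchanged. But as written, the intermediate claim attributes to \cite{chung2020tight} a statement they do not prove, and an argument routed through independent oracles would bound the wrong quantity.
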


\medskip

To demonstrate the power of our results, we also apply them to the hardness of three concrete cryptographic tasks: the multiple instance versions of one-wayness, collision resistance and search, as described next. Note that the applications we list below are non-exhaustive, given $p(R)$ is easy to define for almost every game.

\medskip
\noindent{\bf Hardness of Multi-Image Inversion.} Firstly, we can analyze the quantum hardness of inverting $k$ different images of a random oracle $H : [M] \rightarrow [N]$.

Our first result establishes the quantum hardness of multi-image inversion, which is a tight bound as already proven in \cite{chung2020tight}, but achieved here in a much simpler way, directly from our quantum lifting theorem.

\begin{lemma}[Quantum Hardness of Multi-Image Inversion (Informal)]
    For any distinct $\vec{y} = (y_1, ..., y_k) \in [N]^k$ and for any $q$-quantum query algorithm $\cA$ whose aim is to invert all the images in $\vec{y}$, the success probability of $\cA$ is upper bounded by:
     \begin{equation*}
         \Pr_H[\cA(\vec{y}) \rightarrow \vec{x} = (x_1, ..., x_k) \ : \ H(x_i) = y_i \ \forall i \in [k]] \leq {2^{2k}{{q + k} \choose k}^2} \cdot \frac{k!}{N^k}.
     \end{equation*}
     
\end{lemma}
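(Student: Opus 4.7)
The plan is to cast multi-image inversion as an instance of the multi-output $k$-search game and then invoke the Quantum Lifting Theorem with Classical Reasoning. Concretely, I would define the game so that the challenger fixes the target vector $\vec{y} = (y_1, \ldots, y_k)$ (independent of the oracle, as required by the framework), hands the $y_i$'s to $\cA$ as the $k$ challenges, receives a candidate preimage vector $\vec{x} = (x_1, \ldots, x_k)$, and verifies by issuing $k$ classical queries to obtain $(H(x_1), \ldots, H(x_k))$. I would choose the winning relation to be
$$R = \{(y_1, \ldots, y_k)\} \subseteq [N]^k,$$
so that the event ``$\cA$ wins the game'' coincides exactly with ``$H(x_i) = y_i$ for every $i \in [k]$'', which is the event whose probability we need to bound.

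Next, I would compute $p(R)$ from its definition. A uniformly random tuple $(a_1, \ldots, a_k) \in [N]^k$ satisfies $(a_{\pi(1)}, \ldots, a_{\pi(k)}) = (y_1, \ldots, y_k)$ for some $\pi \in \sym_k$ if and only if $(a_1, \ldots, a_k)$ is a reordering of $(y_1, \ldots, y_k)$. Since by hypothesis the entries of $\vec{y}$ are pairwise distinct, there are exactly $k!$ such reorderings, so
$$p(R) = \frac{k!}{N^k}.$$

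Finally, I would plug this into the Quantum Lifting Theorem with Classical Reasoning applied to $\cA$, a $q$-quantum-query adversary, against the $k$-classical-query verifier, obtaining
$$\Pr_H[\cA(\vec{y}) \text{ wins}] \;\leq\; 2^{2k}\binom{q+k}{k}^{2} \cdot p(R) \;=\; 2^{2k}\binom{q+k}{k}^{2} \cdot \frac{k!}{N^k},$$
which is exactly the claimed bound.

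There is no substantive technical obstacle, since all the heavy lifting is already done by the coherent-reprogramming lifting theorem. The only subtlety worth verifying is that the game fits the hypotheses of the framework: the challenges $\vec{y}$ must be independent of the oracle $H$ (they are, being fixed and arbitrary), and the challenger's verification must make at most $k$ classical RO queries (it does, one per candidate preimage). The distinctness hypothesis on $\vec{y}$ enters exactly where one might expect, namely in producing the clean $k!$ count for $p(R)$; without it the count would instead involve multinomial coefficients depending on the multiplicities of the $y_i$.
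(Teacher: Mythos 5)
Your proposal is correct and follows essentially the same route as the paper: the paper also instantiates the winning relation as $R = \{(y_1,\ldots,y_k)\}$, invokes the lifting theorem for image relations, and bounds $p(R)$ by $k!/N^k$ via the $k!$ permutations (the distinctness of $\vec{y}$ playing the same role you identify). The only cosmetic difference is that the paper obtains $p(R) \le k!/N^k$ by a union bound over permutations rather than your exact count, which is immaterial for the claimed bound.
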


\medskip
\noindent{\bf 
Hardness of Multi-Collision Finding.}  Secondly, we can analyse the quantuanalyzeess of finding $k$ collisions, namely, $k$ inputs that map to the same image of the random oracle $H : [M] \rightarrow [N]$. We can also determine upper bounds for solving the salted version of this task, as well as the hardness of finding a collision for quantum algorithms that are also equipped with advice.

\begin{lemma}[Quantum Hardness of Multi-Collision Finding and Salted Multi-Collision Finding (Informal)]
    For any $q$-quantum query algorithm $\cA$, 
    the probability of solving the $k$-multi-collision problem is at most:
    \begin{equation*}
        \Pr_H[\cA() \rightarrow \vec{x} = (x_1, ..., x_k) \ : \ H(x_1) = ... = H(x_k)] \leq 
         \frac{1}{N^{k - 1}} \left[\frac{2e(q+k)}{k}\right]^{2k}.
    \end{equation*}
    
 Any quantum algorithm $\cA$ equipped with $q$ quantum queries and $S$-bit of classical advice can win the salted multi-collision finding game with salted space $[K]$ with probability at most:
 \begin{equation*}
     \Pr_H[\cA() \rightarrow \vec{x} = (x_1, ..., x_k) \ : \ H(x_1) = ... = H(x_k)] \leq \frac{4}{N^{k - 1}} \left[\frac{2e(q+k)}{k}\right]^{2k} + \frac{4S}{K}.
 \end{equation*}
\end{lemma}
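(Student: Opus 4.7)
The plan is to set up multi-collision finding as a search game with a classical challenger and then invoke the lifting theorem and the salting theorem from above, reducing the proof to two elementary calculations: computing $p(R)$ and bounding a binomial coefficient.

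First I would model the $k$-multi-collision game $\gamemath$: the adversary outputs $(x_1,\dots,x_k)$, and the challenger makes $k$ classical queries $H(x_1),\dots,H(x_k)$ and accepts iff all values are equal. This fits the framework of \Cref{thm:intro_main1} with a $k$-query challenger, and the winning relation is $R = \{(y_1,\dots,y_k) : y_1 = y_2 = \cdots = y_k\}$, which is symmetric under $\sym_k$. Then $p(R)$ is obtained by a direct classical calculation: sampling $(y_1,\dots,y_k)$ uniformly from $[N]^k$, the probability that they are all equal equals $1/N^{k-1}$ (fix $y_1$ arbitrarily and demand $y_2=\cdots=y_k=y_1$), and the permutation quantifier is vacuous since $R$ is symmetric. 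Applying the Quantum Lifting Theorem with Classical Reasoning yields
\begin{equation*}
  \Pr[\cA \text{ wins } \gamemath] \;\leq\; 2^{2k}\binom{q+k}{k}^{2}\cdot \frac{1}{N^{k-1}}.
\end{equation*}

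Next I would convert the binomial factor into the clean form advertised in the statement via the standard bound $\binom{n}{k} \leq (en/k)^{k}$. Taking $n = q+k$ gives $\binom{q+k}{k} \leq \bigl(e(q+k)/k\bigr)^{k}$, and therefore $2^{2k}\binom{q+k}{k}^{2} \leq \bigl(2e(q+k)/k\bigr)^{2k}$, producing the first displayed bound $\frac{1}{N^{k-1}}\bigl[\frac{2e(q+k)}{k}\bigr]^{2k}$.

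For the salted statement I would instantiate the salted game $\gamemath_{s}$ with the same underlying $\gamemath$, so that $p(R) = 1/N^{k-1}$ as before, and plug these values into the ``Salting Defeats Quantum Preprocessing'' theorem from above. That directly gives
\begin{equation*}
  \Pr[\cA \text{ wins } \gamemath_{s}] \;\leq\; 4\left(2^{2k}\binom{q+k}{k}^{2}\frac{1}{N^{k-1}} + \frac{S}{K}\right),
\end{equation*}
and applying the same binomial bound as before yields the second displayed inequality. The proof is essentially just plug-and-chug once the correct value of $p(R)$ has been identified, so there is no serious obstacle; the only point requiring a moment of care is verifying that the challenger for multi-collision finding can be implemented with exactly $k$ queries (one per candidate), which is what justifies using the exponent $k$ in the binomial factor of the lifting theorem rather than any larger value. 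Everything else is a routine application of \Cref{thm:intro_main1} and the salting corollary stated earlier.
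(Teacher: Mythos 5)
Your proposal is correct and follows essentially the same route as the paper: identify $R$ as the all-equal relation with $p(R)=1/N^{k-1}$, apply the lifting theorem for image relations, bound the binomial factor, and invoke the salting lemma for the non-uniform statement. The only cosmetic difference is that you use $\binom{q+k}{k}\leq\bigl(e(q+k)/k\bigr)^{k}$ in one step, whereas the paper splits this into $\binom{q+k}{k}\leq (q+k)^{k}/k!$ followed by Stirling; these are identical in substance.
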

The above bounds become $O(q^4/N)$ and $O(q^4/N +S/K)$ respectively, for $k = 2$ (the standard collision-finding). Previous work \cite{YZ21} achieves the same uniform bound, but only achieves $O((Sq)^4/N +S/K)$, due to the extra loss in their lifting theorem.

\medskip
\noindent{\bf Hardness of Multi-Search.}
Finally, we also establish a tight bound  
for finding $k$ distinct inputs that all map to $0$ under the random oracle $H : [M] \rightarrow [N]$. This is potentially useful in analyzing proofs-of-work in the blockchain context~\cite{garay2015bitcoin}.

\begin{lemma}[Quantum Hardness of Multi-Search]\
    For any $q$-quantum query algorithm $\cA$ whose task is to find $k$ different preimages of $0$ of the random oracle, the success probability of $\cA$ is upper bounded by:
    \begin{align*}
          \Pr_H[\cA() \rightarrow \vec{x} = (x_1, ..., x_k) \ : \ H(x_i) = 0 \ \forall i \in [k]] 
             \leq 
             \left[\frac{4e^2 (q + k)^2}{Nk^2} \right]^k.
    \end{align*}
\end{lemma}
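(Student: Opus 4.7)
The plan is to instantiate the Quantum Lifting Theorem with Classical Reasoning, since the multi-search task fits the multi-output $k$-search game framework naturally: the challenger receives a candidate tuple $\vec{x} = (x_1, \dots, x_k)$ from the adversary, makes $k$ classical queries to compute $\vec{y} = (H(x_1), \dots, H(x_k))$, and accepts iff $\vec{y} \in R$, where $R := \{(y_1,\dots,y_k) \in [N]^k : y_1 = \cdots = y_k = 0\}$. This places the challenger at exactly $k$ oracle queries, which is what the theorem's loss factor is calibrated to.

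First I would compute $p(R)$ directly from its definition. Because $R$ is invariant under every permutation in $\sym_k$, the existential quantifier over $\pi$ collapses, and
\begin{equation*}
p(R) \;=\; \Pr_{(y_1,\dots,y_k) \uniformgets [N]^k}\!\bigl[y_1 = \cdots = y_k = 0\bigr] \;=\; \frac{1}{N^k}.
\end{equation*}
Plugging this into the Quantum Lifting Theorem with Classical Reasoning yields
\begin{equation*}
\Pr[\cA \text{ wins the multi-search game}] \;\leq\; 2^{2k}\binom{q+k}{k}^{2} \cdot \frac{1}{N^k}.
\end{equation*}

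Next I would simplify the binomial factor using the standard estimate $\binom{n}{k} \leq (en/k)^k$ with $n = q+k$, giving $\binom{q+k}{k} \leq \bigl(e(q+k)/k\bigr)^{k}$, and hence
\begin{equation*}
2^{2k}\binom{q+k}{k}^{2} \;\leq\; \left(\frac{2e(q+k)}{k}\right)^{2k} \;=\; \left(\frac{4e^2(q+k)^2}{k^2}\right)^{k}.
\end{equation*}
Multiplying by the $1/N^k$ contribution from $p(R)$ produces the claimed bound $\bigl(4e^2(q+k)^2/(Nk^2)\bigr)^k$.

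I do not expect any real obstacle here: once the game has been cast in the lifting-theorem template, everything is a direct substitution followed by a textbook binomial estimate. The only mild subtlety is verifying that the natural multi-search convention (requiring the $k$ preimages to be distinct) is consistent with, and in fact only strengthens, the winning relation on $[N]^k$; since we want an \emph{upper} bound on the adversary's success, dropping the distinctness constraint at the level of $R$ is harmless, and the symmetry of $R$ removes any bookkeeping around $\sym_k$.
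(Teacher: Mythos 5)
Your proposal is correct and follows essentially the same route as the paper: both instantiate the lifting theorem for image relations with the permutation-invariant relation $R = \{(0,\dots,0)\}$, compute $p(R) = 1/N^k$, and then bound $\binom{q+k}{k} \leq \bigl(e(q+k)/k\bigr)^k$ (the paper just splits this into $\binom{q+k}{k} \leq (q+k)^k/k!$ followed by Stirling). No gaps.
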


\subsection{Related Work}

The measure-and-reprogram framework 
was proposed, and subsequently generalized and improved with tighter bounds 
in the works of \cite{DFM20,LZ19,DFMS22,GHHM21}. A main application of the framework has been in the context of the Fiat-Shamir transformation, 
with several works 
establishing its post-quantum security 
\cite{Chailloux19,DFMS19,AFK22,AFKR23,GOPTT23}. Other cryptographic applications of measure-and-reprogram  have been considered in \cite{Katsumata21,BKS21,ABKK23,JMZ23,JSYP23,KX24}. Finally, applications in query complexity of the framework have been developed in \cite{chung2020tight,YZ21}.

\section{Preliminaries}

\paragraph*{Notation.} For two vectors $\vec{x}, \vec{x}' \in X^k$, we say $\vec{x} \equiv \vec{x}'$ if and only if there exists a permutation $\sigma$ over the indices $\{1, 2, \ldots, k\}$ such that $x'_i = x_{\sigma(i)}$. For a function $H: X \to Y$ and $\vec{x} \in X^k$, $H(\vec{x})$ is defined as $(H(x_1), H(x_2), \ldots, H(x_k))$. We say $x \in \vec{x}$, if $x = x_i$ for some $i \in [k]$. 

\subsection{Quantum Query Algorithms}
\label{sec:prelim_quantum_query_algo}

We will denote a quantum query algorithm by $\cA$. Let $q$ be the total number of quantum queries of $\cA$. By $\cA^H$ we mean that $\cA$ has quantum access to the function $H$.

A quantum oracle query to $H$ will be applied as the unitary $O_H$: $O_H \ket{x} \ket{y} \longrightarrow \ket{x} \ket{y \oplus H(x)}$. 
Without loss of generality, we assume an algorithm will never perform any measurement (until the very end) and thus the internal state is always pure. 
We use $\ket {\phi_i^H}$ to denote the algorithm $\As$'s internal (pure) state right after the $i$-th query. 
\begin{align*}
    \ket {\phi^H_i} =  O_H U_i \cdots O_H U_1 \ket 0. 
\end{align*}
Specifically, we have, 
\begin{itemize}
    \item $\ket {\phi_0^H} = \ket 0$ is the initial state of $\As$; 
    \item $\ket {\phi_q^H}$ is the final state of $\As$. 
\end{itemize}
Without loss of generality, the algorithm will have three registers $\cX, \cY, \cZ$ at the end of the computation, where $\cX$ consists of a list of inputs, $\cY$ consists of a list of outputs corresponding to these inputs and some auxiliary information in $\cZ$. 

\begin{definition}[Reprogrammed Oracle] Reprogram oracle $H$ to output $y$ on input $x$, results in the new oracle, defined as:
$$
H_{x, y}(z) = 
\begin{cases}
     y,  & \text{ if } z = x \\
     H(z), & \text{ otherwise.}
\end{cases}
$$
We can similarly define a multi-input reprogram oracle $H_{\vec{x},\vec{\Theta}}$ for $\vec{x} \in X^k$ without duplicate entries and $\vec{\Theta} \in Y^k$:
$$
H_{\vec{x}, \vec{\Theta}}(z) = 
\begin{cases}
     \Theta_i,  & \text{ if } z = x_i \\
     H(z), & \text{ otherwise.}
\end{cases}
$$
\end{definition}

\subsection{Quantum Measure-and-Reprogram Experiment}

We recall the measure-and-reprogram experiment and the state-of-the-art results here, first proposed by \cite{DFM20} and later adapted by \cite{YZ21}.

\begin{definition}[Measure-and-Reprogram Experiment]\label{def:classical_measure_and_reprogram}
Let $\cA$ be a $q$-quantum query algorithm that outputs $\vec{x} \in X^k$ and $z \in Z$. For a function $H : X \rightarrow Y$ and $\vec{y} = (y_1, ..., y_k) \in Y^k$, define a measure-and-reprogram algorithm $\cB[H, \vec{y}]$:
\begin{enumerate}
    \item For each $j \in [k]$, uniformly pick $(i_j, b_j) \in ([q] \times \{0, 1\}) \cup \{(\perp, \perp)\}$ such that there does not exist $j \neq j'$ such that $i_j = i_{j'} \neq \perp$;
    \item Run $\cA^O$ where the oracle $O$ is initialized to be a quantumly accessible classical oracle that computes $H$ and when $\cA$ makes its $i$-th query, the oracle is simulated as follows:
    \begin{enumerate}
        \item If $i = i_j$ for some $j \in [k]$, measure $\cA$'s query register to obtain $x_j'$ and do either of the following:
        \begin{enumerate}
            \item If $b_j = 0$, reprogram $O$ using $(x_j', y_j)$ and answer $\cA$'s $i_j$-th query using the reprogrammed oracle;
            \item If $b_j = 1$, answer $\cA$'s $i_j$-th query using oracle before reprogramming and then reprogram $O$ using $(x_j', y_j)$;
        \end{enumerate}
        \item Else, answer $\cA$'s $i$-th query by just using the oracle $O$ without any measurement or reprogramming;
    \end{enumerate}
    \item Let $(\vec{x} = (x_1, ..., x_k), z)$ be $\cA$'s output;
    \item For all $j \in [k]$ such that $i_j = \perp$, set $x_j' = x_j$
    \item Output $\vec{x'} := ((x_1', ..., x_k'), z)$
\end{enumerate}
\end{definition}

We next state the current state-of-the-art quantum measure-and-reprogram result.

\begin{lemma}[Quantum Measure-and-Reprogram (adaptation from \cite{DFM20,YZ21})]
For any $H : X \rightarrow Y$, for any $\vec{x}^* = (x_1^*, ..., x_k^*) \in X^k$ without duplicated entries, for all $\vec{y} = (y_1, ..., y_k)$ and any relation $R \subseteq X^k \times Y^k \times Z$, we have:
\begin{equation*}
    \begin{split}
        \Pr&[\vec{x}' = \vec{x}^* \wedge (\vec{x}', \vec{y}, z) \in R \ | \ (\vec{x}', z) \leftarrow \cB[H, y]] \\ 
        & \geq \frac{1}{(2q+1)^{2k}} \Pr[\vec{x} = \vec{x}^* \wedge (\vec{x}, \vec{y}, z) \in R \ | \ (\vec{x}, z) \leftarrow \cA^{H_{\vec{x}^*, \vec{y}}}]
    \end{split}
\end{equation*}
    where $\cB[H, y]$ is the measure-and-reprogram experiment.  
\end{lemma}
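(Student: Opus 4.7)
The plan is to establish the lemma by induction on $k$, reducing the multi-point measure-and-reprogram statement to the single-point result of Don, Fehr, and Majenz \cite{DFM20}. For the base case $k = 1$, the claim is precisely the DFM single-point lemma: for any $q$-query $\cA$, any $(x^*, y)$, and any relation $R$, the simulator $\cB$ that uniformly picks $(i, b) \in ([q] \times \{0,1\}) \cup \{(\perp, \perp)\}$ and performs a measure-and-reprogram at position $i$ (either before or after answering the $i$-th query, depending on $b$) outputs $x^*$ together with a correct auxiliary string with probability at least $(2q+1)^{-2}$ times the success probability of $\cA^{H_{x^*, y}}$. This in turn follows from a hybrid argument over the $q + 1$ boundary positions between queries: one shows that the quantum state produced by $\cA$ when the oracle is reprogrammed at some intermediate position is close, up to a term controlled by the query weight on $x^*$, to the state produced with no reprogramming, and then applies a Cauchy--Schwarz / triangle-inequality step to conclude that the mass on $x^*$ must materialize at some query index.

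For the inductive step, assume the statement holds for $k-1$ reprogrammed points. Given a target $\vec{x}^* = (x_1^*, \ldots, x_k^*)$ without duplicated entries, I would view $\cA^{H_{\vec{x}^*, \vec{y}}}$ as an algorithm $\cA'$ running against the oracle $H_{(x_1^*,\ldots,x_{k-1}^*),(y_1,\ldots,y_{k-1})}$ that is then further reprogrammed at the single point $(x_k^*, y_k)$. Applying the base case to this last reprogramming extracts $x_k^*$ at a randomly chosen query position $i_k$ with a loss of $(2q+1)^{-2}$. Then applying the inductive hypothesis to the resulting intermediate simulator---which still makes $q$ queries, now to $H$ rather than to the partially reprogrammed oracle---removes the remaining $k - 1$ reprogrammings at an additional loss of $(2q+1)^{-2(k-1)}$, yielding the overall factor $(2q+1)^{-2k}$. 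The assumption that $\vec{x}^*$ has no duplicated entries is crucial here: it ensures that the reprogrammings at different coordinates commute as oracles, so the order in which coordinates are peeled off is immaterial and the predicate $R$ evaluated on the final output is unchanged.

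The main obstacle is reconciling the joint distribution of $((i_1, b_1), \ldots, (i_k, b_k))$ that arises from iterating the single-point argument with the flat distribution prescribed in \Cref{def:classical_measure_and_reprogram}, which requires that no two indices $i_j$ collide unless both are $\perp$. When peeling off coordinate $k$ first, the index $i_k$ may coincide with some index later chosen by the inductive step; one must argue that conditioning on non-collision costs only a multiplicative factor that is already absorbed by the slack provided by the extra $(\perp, \perp)$ slot in the index set (this is precisely the role of the ``$+1$'' in $2q+1$). Further care is needed because the relation $R$ depends on all of $\vec{y}$ rather than only on the coordinate being peeled off, and because the auxiliary output $z$ must be propagated untouched through the recursion; tracking these dependencies is the bulk of the technical work, but both issues are resolved by the same averaging arguments used in \cite{DFM20,YZ21} without requiring any new conceptual ingredient beyond careful bookkeeping.
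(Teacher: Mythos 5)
You should first note that the paper does not actually prove this lemma: it is recalled in the preliminaries as an imported, black-box result from \cite{DFM20,YZ21}, so there is no in-paper proof to match. The closest in-paper analogue is the proof of \Cref{thm:coherent_measure_and_reprogram}, which handles all $k$ reprogrammed points at once by a single global state decomposition --- inserting $I = \bigl(I - \sum_j \ketbra{x_j}{x_j}\bigr) + \sum_j \ketbra{x_j}{x_j}$ at every query, collecting the resulting terms indexed by $(\vec{v}, \vec{b})$, and applying Cauchy--Schwarz to the sum --- with no induction on $k$. Your single-point base case is essentially a correct summary of the DFM argument, and an inductive route is in the spirit of how the multi-point versions are derived in the cited works, but as written your inductive step has a genuine gap.

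The gap is this: after applying the base case to peel off $(x_k^*, y_k)$, the ``intermediate simulator'' you obtain is \emph{not} a $q$-query algorithm relative to the fixed oracle $H_{(x_1^*,\ldots,x_{k-1}^*),(y_1,\ldots,y_{k-1})}$. It performs a mid-execution measurement and, from the randomly chosen query $i_k$ onward, queries an oracle that has additionally been reprogrammed at $x_k^*$ --- i.e.\ its oracle switches in the middle of the run. The inductive hypothesis, as stated, applies only to algorithms querying a single fixed (partially reprogrammed) oracle throughout, so it cannot be invoked on this object black-box. The standard fix is to reverse the peeling order: apply the inductive hypothesis to $\cA$ viewed relative to the \emph{base} oracle $H_{x_k^*, y_k}$ (folding the last reprogramming into $H$), and only afterwards apply the single-point lemma to the resulting simulator to strip that last reprogramming --- which in turn requires the single-point lemma to hold for algorithms with intermediate measurements. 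Separately, your treatment of the index-collision issue is an assertion, not an argument: the flat distribution over non-colliding tuples $(i_1,b_1),\ldots,(i_k,b_k)$ in \Cref{def:classical_measure_and_reprogram} is not the product of $k$ independent uniform single-point choices, and the claim that conditioning on non-collision is ``absorbed by the $(\perp,\perp)$ slot'' is precisely the quantitative step that must be proved. In the direct (non-inductive) decomposition this problem never arises, because the sum ranges only over non-colliding tuples from the outset; if you keep the inductive route, this reconciliation is where the real work lies.
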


\subsection{Predicates and Success Probabilities}
\begin{definition}[Predicate/Verification Projection/Symmetric Predicate] 
Let $R$ be a relation on $X^k \times Y^k \times Z$. A predicate $V^H(\vec{x}, \vec{y}, z)$ parameterized by an oracle $H$, returns $1$ if and only if $(\vec{x}, \vec{y}, z) \in R$ and $H(x_i) = y_i$ for every $i \in \{1, 2, \ldots, k\}$. 

Let $\cX, \cY, \cZ$ be the registers that store $\vec{x}, \vec{y}, z$, respectively. 
We define $\Pi^H_V$ as the projection corresponding to $V^H$: 
\begin{equation*}
    \Pi^H_V \ket {\vec{x}, \vec{y}, z} = 
    \begin{cases}
        \ket {\vec{x}, \vec{y}, z} & \text{ if } V^H(\vec{x}, \vec{y}, z) = 1\\
        0 & \text{otherwise}
    \end{cases}. 
\end{equation*}

\end{definition}

\medskip

Finally, for any predicate $V^H$, we are able to establish the success probability using the projection $\Pi^H_V$. 
\begin{definition}[Success Probability] \label{def:succ_prob}
Let $\cA$ a quantum query algorithm. Its success probability of outputting $\vec{x}, \vec{y}, z$ such that $H(\vec{x}, \vec{y}, z) = 1$ is defined by 
\begin{equation*}
    \Pr\left[\cA^H \rightarrow (\vec{x}, \vec{y}, z) \text{ and } V^H(\vec{x}, \vec{y}, z) = 1\right] = \norm{\Pi^H_V \ket{\phi_q^H}}^2. 
\end{equation*}
(Recall that $\ket{\phi_q^H}$ is the final state of $\As$.) 

Sometimes, we care about the event that $\As$ outputs a particular $\vec{x}$ and still succeeds. 
For any $\xout$, the following probability denotes that $\As$ outputs $\vec{x} \equiv \xout$  and succeeds: 
\begin{equation*}
    \Pr\left[\cA^H \rightarrow (\vec{x}, \vec{y}, z) ,\ \, \ \vec{x} \equiv \xout \,\text{ and } V^H(\vec{x}, \vec{y}, {z}) = 1 \right] = \norm{G_{\xout} \Pi^H_V \ket{\phi_q^H}}^2,
\end{equation*}
where $G_{\xout}$ is defined as the projection that checks whether $\cA$ consists of $\vec{x} \equiv \xout$. 
\end{definition}

\section{Coherent Measure-and-Reprogram}

In this section, we give our main theorem: the coherent measure-and-reprogram theorem. 
A main difference between our theorem and the previous measure-and-reprogram theorem~\cite{DFM20} 
is that our simulator needs to make quantum queries, instead of classical queries, which is potentially required by the coherent nature of our simulator and gives tighter reprogramming bounds for many applications. While this
makes the simulator slightly more complicated, 
it yields improved bounds on the various applications that we 
mention 
in the next section. 

\subsection{Main Theorem}

We give our main theorem below. 
\begin{theorem}\label{thm:coherent_measure_and_reprogram}
    Let $H, G: \{0,1\}^m \to \{0,1\}^n$ be two functions $X \to Y$. 
    Let $k$ be a positive integer (can be a computable function in both $n$ and $m$). There exists a black-box quantum algorithm $\Sim^{H, G, \As}$, satisfying the  properties below. 
    Let $V^H$ be any predicate defined over $X^k \times Y^k \times Z$. 
    Let $\As$ be any $q$-quantum query algorithm to the oracle $H$. Then for any $\xout \in X^k$ without duplicate entries and $\yout = G(\xout)$, we have, 
    \begin{align*}
        & \Pr_{H, G}\left[\Sim^{H, G, \As} \rightarrow (\vec{x}, \vec{y}, z)  \text{ and } \vec{x} \equiv \xout \text{ and } V^{H_{\xout, \yout}}(\vec{x}, \vec{y}, z) = 1\right]  \\ 
        \geq & \frac{1}{2^{2k} \binom{q+k}{k}^{2}}  \cdot \Pr_{H, G}\left[\cA^{H_{\xout, \yout}} \rightarrow (\vec{x}, z) \text{ and } \vec{x} \equiv \xout  \text{ and } V^{H_{\xout, \yout}}(\vec{x}, H_{\xout, \yout}(\vec{y}), z) = 1\right]. 
    \end{align*}
    Furthermore, $\Sim$ makes exactly $k$ quantum queries to $G$ and has a running time polynomial in $n, m, k$ and the running time of $\As$. 
\end{theorem}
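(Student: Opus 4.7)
The plan is to construct a simulator $\Sim$ that, instead of measuring and reprogramming classically as in \cite{DFM20}, performs all reprogramming operations in superposition. Specifically, $\Sim$ samples uniformly at random a composition $(c_0, c_1, \ldots, c_k)$ of $q$ into $k+1$ non-negative parts (there are $\binom{q+k}{k}$ such compositions) together with a bit string $\vec{b} = (b_1, \ldots, b_k) \in \{0,1\}^k$. It initializes $k$ auxiliary pairs of registers $(\cX_j, \cY_j)_{j=1}^k$ to $\ket 0$ and runs $\cA$'s circuit using $H$ to answer oracle calls, interleaving $k$ \emph{coherent slot operations}: just before $\cA$'s $(c_0 + \cdots + c_{j-1} + 1)$-th query, it coherently copies $\cA$'s query register into $\cX_j$ and applies $O_G$ to $(\cX_j, \cY_j)$. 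The bit $b_j$ controls whether this slot operation is inserted strictly before or strictly after $\cA$'s next $H$-query. After slot $j$ fires, every subsequent $\cA$-oracle-call is answered via a coherently reprogrammed oracle of the form $H_{\vec{x}, \vec{\Theta}}$ where $(\vec{x}, \vec{\Theta})$ is the current content of the slot registers used so far; this can be implemented from $O_H$ and the slot registers with constant overhead. At the end, $\Sim$ measures the slot registers to recover $(\vec{x}, \vec{y})$ and outputs these along with $\cA$'s output. By construction $\Sim$ makes exactly $k$ queries to $G$ and runs in time polynomial in $n,m,k$ and the runtime of $\cA$.

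For the analysis, I would fix $\xout$ (and hence $\yout = G(\xout)$) and decompose $\cA^{H_{\xout,\yout}}$'s execution using the expansion $O_{H_{\xout,\yout}} = O_H + \Delta$, where $\Delta$ is supported on inputs lying in $\xout$. Multiplying out the product of $q$ oracle applications yields a sum of $2^q$ terms, each indexed by the subset $S \subseteq [q]$ of query positions where $\Delta$ is used, together with an assignment of each $\Delta$-use to one of the $k$ entries of $\xout$. The key technical step is a \emph{coherent telescoping identity} that reorganizes this expansion by grouping terms according to (i) how many $\Delta$-uses fall between consecutive blocks of $O_H$-uses, and (ii) which slot-index each $\Delta$-use is assigned to. One then shows that the projection of $\cA^{H_{\xout,\yout}}$'s final state onto the event $\vec{x}\equiv\xout$ equals a sum, indexed by compositions $(c_0, \ldots, c_k)$ of $q$ and bits $\vec{b}\in\{0,1\}^k$, of states that are produced exactly by $\Sim$ when it samples $(\vec{c},\vec{b})$ and conditions on measuring $\xout$ in the slot registers. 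The combinatorial count $\binom{q+k}{k}$ arises because each $G$-query is inserted \emph{between} $H$-queries rather than replacing one, so one counts $k$ gap-placements among $q+k$ total operations; the $\vec{b}$-bits handle the boundary ambiguity where a $\Delta$-use is adjacent to an $H$-query.

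Averaging over the uniform choice of $(\vec{c},\vec{b})$ converts a per-run amplitude into an average amplitude with a factor $\frac{1}{2^k \binom{q+k}{k}}$; squaring (via Cauchy--Schwarz on the $(\vec{c},\vec{b})$-indexed sum) then yields the claimed probability factor $\frac{1}{2^{2k} \binom{q+k}{k}^2}$ and completes the inequality. Two loose ends close the argument: first, $\Sim$'s output automatically satisfies $V^{H_{\xout,\yout}}(\vec{x},\vec{y},z)=1$ whenever $\vec{x}\equiv\xout$, because $\cY_j = G(\cX_j)$ holds by the $O_G$-gate and $G$ agrees with $H_{\xout,\yout}$ on $\xout$ by hypothesis; second, the coherently reprogrammed oracles used inside $\Sim$ can be implemented from $O_H$ and the slot registers with $O(k)$ gate overhead, preserving the claimed runtime.

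The hard part will be establishing the coherent telescoping identity rigorously. In \cite{DFM20}, the telescoping is enabled by classical measurements that collapse the superposition at each chosen query position, so one can argue inductively per slot. In the coherent setting no such collapse happens, and one must instead show that the sum over $(\vec{c},\vec{b})$ of simulator runs exactly reassembles the $2^q$-term expansion of $\cA^{H_{\xout,\yout}}$ restricted to the $\vec{x}\equiv\xout$ event. The bookkeeping is delicate: each $\Delta$-use at query position $i$ must be consistently assigned to one of the $k$ slots and to a position in the composition, and the pre-versus-post bit $b_j$ must be used precisely when two $\Delta$-uses are adjacent in the product but belong to different slots, or when a $\Delta$-use abuts an $H$-query boundary. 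I expect this to be the main technical obstacle; once the identity is in place, the remainder is Cauchy--Schwarz plus routine counting.
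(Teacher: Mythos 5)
Your simulator and overall architecture are essentially the paper's: the slot registers are the paper's control register $\mathcal{R}$, the coherent copy-plus-$O_G$ is the paper's update unitary $U$, the reprogrammed-oracle-controlled-on-slots is $O_H^{\sf ctrl}$, and the final structure (decompose the real execution into $2^k\binom{q+k}{k}$ subnormalized branches indexed by positions and pre/post bits, relate each branch to one simulator run, then Cauchy--Schwarz) is exactly the paper's proof skeleton. However, the step you yourself flag as the main obstacle --- the ``coherent telescoping identity'' --- is precisely the content of the paper's state decomposition, and you leave it unproven while proposing a derivation route that has a concrete difficulty the paper's route avoids. Expanding $O_{H_{\xout,\yout}} = O_H + \Delta$ over $q$ queries produces $2^q$ terms in which $\Delta$ may act at \emph{several} positions on the \emph{same} entry $x_j$; such terms have no interpretation as ``slot $j$ fires once at position $i$,'' so your regrouping must somehow resum all repeated uses of the same $x_j$ into a single slot event, and it is not clear how to do this while keeping the count at $2^k\binom{q+k}{k}$ rather than something growing with $q$. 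The paper instead inserts, immediately before each oracle call, the resolution of identity $I = \bigl(I - \sum_j \ketbra{x_j}{x_j}\bigr) + \sum_j \ketbra{x_j}{x_j}$ restricted to the \emph{not-yet-reprogrammed} entries, and uses that on each branch the full reprogrammed oracle is functionally equivalent to a partially reprogrammed one. Repeated queries to an already-reprogrammed $x_j$ are then absorbed into the ``unmeasured'' evolution automatically, so the decomposition is exact with exactly one slot event per entry and no regrouping is needed.

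Two further points you would need to repair or make precise. First, the $\binom{q+k}{k}$ count in the paper does not come from ``inserting $G$-queries into gaps'': the paper forces $\cA$ to make $k$ additional queries at the end (computing $\vec{y}=H(\vec{x})$), and the $k$ slot positions range over all $q+k$ actual query positions, with any slots not used during the first $q$ queries forcibly fired during the last $k$; your composition-of-$q$ bookkeeping gives the same number but you must check the operational semantics when several slots fire at the same gap. Second, your claim that $\Sim$'s output ``automatically satisfies $V^{H_{\xout,\yout}}$ whenever $\vec{x}\equiv\xout$'' is too strong ($V$ also encodes the relation $R$, which depends on $\cA$'s whole output); what actually closes the argument in the paper is the observation that when the slot contents form a permutation of $\xout$, the control register holds the \emph{set} $\{(\xo{1},\yo{1}),\ldots,(\xo{k},\yo{k})\}$ independently of the order $\sigma$, hence is unentangled from $\cA$'s registers, which is what makes the simulator's per-branch success probability exactly equal to $\bigl\| G_{\xout}\Pi_V^{H_{\xout,\yout}}\ket{\phi_{\vec{v},\vec{b}}}\bigr\|^2$ rather than merely related to it. Without that disentanglement observation the interference between different $\sigma$-orderings is lost and the bound degrades.
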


\medskip

Before formally defining our simulator, we introduce one more notation: controlled reprogrammed oracle queries. That is, an oracle query will be reprogrammed by a list of input and output pairs in a control register. 
\begin{definition}[Controlled Reprogrammed Oracle Query]
    For every $x \in X, y \in Y$, every $\ell > 0$ and $\vec{x} \in X^\ell$ without duplicated  entries, $\Theta \in Y^\ell$, controlled reprogrammed oracle $O^{\sf ctrl}_H$ acts as below. 
    \begin{align*}
        O^{\sf ctrl}_H \ket {x} \ket y \ket {\vec{x}, \vec{\Theta}} = \left( O_{H_{\vec{x},\vec{\Theta}}} \ket x \ket y \right) \ket {\vec{x}, \vec{\Theta}}
    \end{align*}
    Its behavior on $\ket {\vec{x}}$ with duplicated entries can be arbitrarily defined as long as unitarity is maintained, as this case will never occur in the simulator or our analysis. 
\end{definition}

We define our simulator used in \Cref{thm:coherent_measure_and_reprogram}, as follows:
\begin{definition}[Coherent Measure-and-Reprogram Experiment]
\label{def:quantum_simulator}
Let $\cA$ be a $(q+k)$-quantum query algorithm that outputs $\vec{x} = (x_1, \ldots, x_k) \in X^k, \vec{y} \in Y^k$ and $z \in Z$.
We assume $\vec{y}$ is always computed by $H(\vec{x})$, using the last $k$ queries. 
For a function $H : X \rightarrow Y$ and $G:  X \rightarrow Y$, define a measure-and-reprogram algorithm $B[H, G]$:
\begin{enumerate}
    \item Pick a uniformly random subset $\vec{v}$ of $[q+k]$, of length $k$. We have $1 < v_1 < \cdots < v_k \leq q + k$. 
    Pick $\vec{b} \in \{0,1\}^k$ uniformly at random. 
    \item Run $\cA$ \ul{with an additional control register ${\cal R}$, initialized as empty $\ket \emptyset$}. Define the following operation $U$ that updates the control register: for $x$ that is not in $L$, 
    \begin{align*}
        U \ket x {\ket L}_{\cal R} \gets \ket x {\ket {L \cup (x, G(x))}}_{\cal R}. 
    \end{align*}
    Here $L$ is the set of input and output pairs. Since we will only work with basis states $\ket x \ket L$ whose $x$ is not in $L$, $U$ clearly can be implemented by a unitary (by assuming that the list is initialized as empty). 
    
    \item When $\cA$ makes its $i$-th query, 
    \begin{enumerate}
        \item If $i = v_j$ for some $j \in [k]$, do either of the following:
        \begin{enumerate}
            \item If $b_j = 0$, \ul{update ${\cal R}$ using the input register and $G$}, and make the $i$-th query to $H$ controlled by ${\cal R}$ (see $O_H^{\sf ctrl}$ above); 
            \item If $b_j = 1$, make the $i$-th query to $H$ controlled by ${\cal R}$ and \ul{update ${\cal R}$ using the input register and $G$}. 
            \item Before updating the control register, it \ul{checks coherently that the input register is not contained in the control register}; otherwise, it aborts.
        \end{enumerate}
        \item Else, answer $\cA$'s $i$-th query controlled by ${\cal R}$;
    \end{enumerate}
    \item Let $(\vec{x}, \vec{y}, z)$ be $\cA$'s output;
    \item \ul{Measure ${\cal R}$ register to obtain $L = (\vec{x}', \vec{\Theta}')$}. 
    \item Output $(\vec{x}, \vec{y}, z)$ \ul{if $\vec{x}' \equiv \vec{x}$}; otherwise, abort. 
\end{enumerate}
\end{definition}
At a high level, our simulator resembles that in \Cref{def:classical_measure_and_reprogram}; instead of measuring $\As$'s queries, we put it into a separate register (a.k.a., measure the queries coherently). With the ``controlled reprogrammed oracle query'', we are still able to progressively reprogram the oracle and run the algorithm with (an) updated oracle(s). The ability of coherently measuring and reprogramming, makes all the improvement (mentioned in the later sections) possible. 

\begin{proof}[Proof of \Cref{thm:coherent_measure_and_reprogram}]
Before we start with the proof, we first recall and introduce some notations. Fix any $\vec{x} \in X^k$ without duplicate entries and $\vec{\Theta} \in Y^k$.  
Recall that in \Cref{sec:prelim_quantum_query_algo}, $\left| {\phi^{H_{\vec{x}, \vec{\Theta}}}_q} \right\rangle$ is the state of the algorithm $\As$ after making all its queries to $H_{\vec{x}, \vec{\Theta}}$. More precisely, it is: 
\begin{align*}
    \left| {\phi^{H_{\vec{x}, \vec{\Theta}}}_q} \right\rangle=  O_{H_{\vec{x}, \vec{\Theta}}} U_q \cdots O_{H_{\vec{x}, \vec{\Theta}}} U_1 \ket 0. 
\end{align*}

In the next step, we decompose this quantum state, so that each component corresponds to one of the cases in the quantum simulator \Cref{def:quantum_simulator}. 

\paragraph*{The first query. } We start by considering the state up to the first query: $O_{H_{\vec{x}, \vec{\Theta}}} U_1 \ket 0$. We insert an additional identity operator and have,
\begin{align*}
    O_{H_{\vec{x}, \vec{\Theta}}} U_1 \ket 0 & = O_{H_{\vec{x}, \vec{\Theta}}} \, {I} \, U_1 \, \ket 0 \\
     & \numeq{i}  O_{H_{\vec{x}, \vec{\Theta}}} \left(I - \sum_{x_j} \ketbra {x_j}{x_j} + \sum_{x_j}\ketbra {x_j}{x_j}\right) U_1 \ket 0 \\
     & = O_{H_{\vec{x}, \vec{\Theta}}} \left(I - \sum_{x_j} \ketbra {x_j}{x_j} \right) U_1 \ket 0 + O_{H_{\vec{x}, \vec{\Theta}}} \left(\sum_{x_j} \ketbra {x_j}{x_j} \right) U_1 \ket 0 \\
     & \numeq{ii}  O_H \left(I - \sum_{x_j} \ketbra {x_j}{x_j} \right) U_1 \ket 0 + \sum_{x_j} O_{H_{x_j, \Theta_j}} \, \ketbra {x_j}{x_j} \,U_1 \ket 0 \\
     & = \underbrace{O_H U_1 \ket 0 \vphantom{O_{H_{x_j, \Theta_j}} \, \ketbra {x_j}{x_j}\,U_1 \ket 0}}_\textrm{(1)} - \sum_{x_j} \underbrace{O_H \ketbra {x_j}{x_j} U_1 \ket 0 \vphantom{O_{H_{x_j, \Theta_j}} \, \ketbra {x_j}{x_j}\,U_1 \ket 0}}_{\textrm{(2)}} + \sum_{x_j} \underbrace{O_{H_{x_j, \Theta_j}} \, \ketbra {x_j}{x_j}\,U_1 \ket 0}_\textrm{(3)}
\end{align*}
Above, $x_j$ is enumerated over all entries in $\vec{x}$. 

Line (i) follows easily. Line (ii) is due to the fact that, if the query input is not in $\vec{x}$, $H_{\vec{x}, \vec{\Theta}}$ is functionally equivalent to $H$; similarly, if the query input is $x_j$, $H_{\vec{x}, \vec{\Theta}}$ is functionally equivalent to $H_{x_j, \Theta_j}$. 

Next, we look at the three terms (1), (2), (3): 
\begin{itemize}
    \item[(1)] $O_H U_1 \ket 0$ corresponds to the case that no measurement happens for the first query.
    \item[(2)] $O_H \ketbra {x_j} {x_j} U_1 \ket 0$ corresponds to the case that measurement is made at the first query and the query input is $x_j$; the oracle is not reprogrammed immediately. In other words, the case $(v_1, b_1) = (1, 1)$ in the simulator. 
    \item[(3)] $O_{H_{x_j, \Theta_j}} \ketbra {x_j} {x_j} U_1 \ket 0$ corresponds to the case that measurement is made at the first query and the query input is $x_j$; the oracle is  reprogrammed immediately and used for the first query. In other words, the case $(v_1, b_1) = (1, 0)$ in the simulator. 
\end{itemize}

\paragraph*{The second query.}  We do the same: insert an additional identity operator. 
To make the presentation clearer, we focus only on one term $O_H \ketbra {x_j} {x_j} U_1 \ket 0$; the other cases are simpler.

\begin{align*}
    \, & O_{H_{\vec{x}, \vec{\Theta}}} U_2 O_H \ketbra {x_j} {x_j} U_1 \ket 0 \\
    \numeq{1}  \, & O_{H_{\vec{x}, \vec{\Theta}}} \left(I - \sum_{x_k \ne x_j} \ketbra {x_{{k}}}{x_{{k}}} + \sum_{x_k \ne x_j}\ketbra {x_{{k}}}{x_{{k}}}\right) U_2 O_H \ketbra {x_j} {x_j} U_1 \ket 0 \\
    \numeq{2} \, & O_{H_{\vec{x}, \vec{\Theta}}} \left(I - \sum_{x_k \ne x_j} \ketbra {x_{{k}}}{x_{{k}}}\right) U_2 O_H \ketbra {x_j} {x_j} U_1 \ket 0 \\
    & \quad\quad\quad\quad + \, O_{H_{\vec{x}, \vec{\Theta}}} \left(\sum_{x_k \ne x_j}\ketbra {x_{{k}}}{x_{{k}}}\right) U_2 O_H \ketbra {x_j} {x_j} U_1 \ket 0 \\
    \numeq{3} \, & O_{H_{x_j, \Theta_j}} \left(I - \sum_{x_k \ne x_j} \ketbra {x_{{k}}}{x_{{k}}}\right) U_2 O_H \ketbra {x_j} {x_j} U_1 \ket 0 \\
    & \quad\quad\quad\quad + \,  \left(\sum_{x_k \ne x_j} O_{H_{(x_j, x_k), (\Theta_j, \Theta_k)}}\ketbra {x_{{k}}}{x_{{k}}}\right) U_2 O_H \ketbra {x_j} {x_j} U_1 \ket 0 \\
    = \, & \underbrace{O_{H_{x_j, \Theta_j}}  U_2 O_H \ketbra {x_j} {x_j} U_1 \ket 0}_{(1)} {-} \sum_{x_k \ne x_j} \underbrace{O_{H_{x_j, \Theta_j}} \ketbra {x_k} {x_k}  U_2 O_H \ketbra {x_j} {x_j} U_1 \ket 0}_{(2)} \\
    &  \quad\quad\quad\quad + \, \sum_{x_k \ne x_j} \underbrace{O_{H_{(x_j, x_k), (\Theta_j, \Theta_k)}} \ketbra {x_k} {x_k}  U_2 O_H \ketbra {x_j} {x_j} U_1 \ket 0}_{(3)}
\end{align*}

We explain the equations line by line:
\begin{enumerate}
    \item This one is straightforward by realizing the summation inside the bracket is an identity operator. 
    \item This one follows from the distributive property.
    \item This is the most important one. 
    \begin{itemize}
        \item For the first term, we realize that the oracle will only be applied to inputs that are not in $\vec{x}$, or are equal to $x_j$. Thus, $H_{\vec{x}, \vec{\Theta}}$ is functionally equivalent to ${H_{x_j, \Theta_j}}$.
        \item For the second term, the oracle will only be applied to inputs that are equal to $x_k$. Thus, $H_{\vec{x}, \vec{\Theta}}$ is functionally equivalent to ${H_{(x_j, x_k), (\Theta_j, \Theta_k)}}$\footnote{It is also equivalent to ${H_{x_k,  \Theta_k}}$. However, due to our description of the simulator, $H_{(x_j, x_k), (\Theta_j, \Theta_k)}$ is more natural to work with.}.
    \end{itemize}
\end{enumerate}

\begin{itemize}
    \item[(1)] corresponds to the case that no measurement happens for the second query, but since the first query is measure-and-reprogrammed, the second query is made with the oracle $H_{x_j, \Theta_j}$. In other words, the case $(v_1, b_1) = (1,1)$. 
    \item[(2)] corresponds to the case that measurement is made at the second query and the query input is $x_k$; the oracle is not reprogrammed immediately. In other words, the case $(v_1, b_1) = (1, 1)$ and $(v_2, b_2) = (2, 1)$ in the simulator. 
    \item[(3)]  corresponds to the case that measurement is made at the second query and the query input is $x_k$; the oracle is reprogrammed immediately. In other words, the case $(v_1, b_1) = (1, 1)$ and $(v_2, b_2) = (2, 0)$ in the simulator.
\end{itemize}

\paragraph*{Generalization to all queries --- state decomposition.}
By repeating the same state decomposition up to the first $q$ queries (instead of all $q + k$ queries), we will end up a collection of subnormalized states, who sum up to the original state $\left| {\phi^{H_{\vec{x}, \vec{\Theta}}}_q} \right\rangle$. These states are parameterized by when the measurement happens (an ordered vector $\vec{v}$ such that $1 \leq v_1 \cdots \leq v_t \leq q$), whether these queries are made before or after each reprogramming ($\vec{b} \in \{0,1\}^t$), and $t \in \{0, \ldots, q\}$; in the following we will denote these states by $\ket {\phi_{\vec{v}, \vec{b}}}$. For example, assuming $\vec{b} = \vec{0}$ (all reprogramming happens immediately), we have, 
\begin{align*}
    \ket {\phi_{\vec{v}, \vec{0}}} = \sum_{\sigma \in S^k_t} &  O_{H_{\vec{x}_\sigma, \vec{\Theta}_\sigma}} U_{q} \cdots O_{H_{\vec{x}_\sigma, \vec{\Theta}_\sigma}} U_{v_t+1} \underbrace{O_{H_{\vec{x}_\sigma, \vec{\Theta}_\sigma}}\ketbra{x_{\sigma_t}}{x_{\sigma_t}} \cdots U_{v_{t-1}+1}}_{\text{stage (t)}} \\
    & \cdots \\
    & \cdot \underbrace{O_{H_{(x_{\sigma_1}, x_{\sigma_2}), (\Theta_{\sigma_1}, \Theta_{\sigma_2})}}\ketbra{x_{\sigma_2}}{x_{\sigma_2}} U_{v_2} \cdots O_{H_{x_{\sigma_1}, \Theta_{\sigma_1}}} U_{v_1+1}}_{\text{stage (2)}}  \\
    & \cdot \underbrace{O_{H_{x_{\sigma_1}, \Theta_{\sigma_1}}} \ketbra {x_{\sigma_1}}{x_{\sigma_1}}  U_{v_1} O_H \cdots O_H U_1 \ket 0}_{\text{stage (1)}} 
\end{align*}
Here $S^k_t$ denotes all ordered list of length $t$, with elements in $\{1, \ldots, k\}$ without duplication; $\vec{x}_\sigma$ denotes $(x_{\sigma_1}, \ldots, x_{\sigma_t})$ and $\vec{\Theta}_\sigma$ denotes $(\Theta_{\sigma_1}, \ldots, \Theta_{\sigma_t})$.
We can similarly define $\ket {\phi_{\vec{v}, \vec{b}}}$ for all other $\vec{b} \in \{0,1\}^t$, the only difference here is the oracle may not be immediately reprogrammed at the end of each stage. More generally, for each $\vec{v}$ of length $t$ and $\vec{b} \in \{0,1\}^t$, we define
\begin{align*}
\ket {\phi_{\vec{v}, \vec{b}}} = \sum_{\sigma \in S^k_t} \ket {\phi_{\vec{v}, \vec{b}, \sigma}},
\end{align*}
where $\ket {\phi_{\vec{v}, \vec{b}, \sigma}}$ is the state that is measured-and-reprogrammed  according to $\vec{v}, \vec{b}$ with the order $\sigma$, similar to that in the definition of $\ket {\phi_{\vec{v}, \vec{0}}}$. 
Thus, we have: 
\begin{align*}
    \left| {\phi^{H_{\vec{x}, \vec{\Theta}}}_{q}} \right\rangle = \sum_{\substack{\vec{v}, \vec{b}}} \ket {\phi_{\vec{v}, \vec{b}}},
\end{align*}

\paragraph*{Adding the extra $k$ queries.}
We assume the algorithm $\As$, after the first $q$ queries, already prepares the output $\vec{x}, z$. We will force $\As$ making the last $k$ queries, to generate $\vec{y} = H(\vec{x})$. 
Recall the definitions $\Pi^{H_{\xout, \yout}}_V$ and $G_{\xout}$ in \Cref{def:succ_prob}. By setting $\vec{x} = \xout$ and $\vec{\Theta} = \yout$ in the above analysis, the probability on the RHS in the theorem we are proving is equal to:
\begin{align*}
    \Pr_{H, G}&\left[\cA^{H_{\xout, \yout}} \rightarrow (\vec{x}, z) \text{ and } \vec{x} \equiv \xout  \text{ and } V^{H_{\xout, \yout}}(\vec{x}, H_{\xout, \yout}(\vec{x}), z) = 1\right] 
   \notag\\
    &  = \left\|G_{\xout} \Pi_V^{H_{\xout, \yout}} \left| {\phi^{H_{\xout, \yout}}_{q+k}} \right\rangle \right\|^2.
\end{align*}

Since $G_{\xout}$ and $\Pi_V^{H_{\xout, \yout}}$ commute (as they are both projections over computational basis), we can assume $G_{\xout}$ is applied to the state first. Even further, as the computation of $\vec{y} = H(\vec{x})$ and the projection $G_{\xout}$ also commute, we can assume $G_{\xout}$ applies to the state right before the last $k$ queries, which are used to compute $\vec{y}$. Therefore, for every $\ket{\phi_{\vec{v}, \vec{b}, \sigma}}$, even if $t < k$ (the length of $\vec{v}$), we can measure-and-(immediately)-reprogram exactly $k - t$ locations of the last $k$ queries, and making the random oracle exactly reprogrammed to $H_{\xout, \yout}$.

Thus, we have: 
\begin{align}\label{eq:state_decompose}
    \left| {\phi^{H_{\xout, \yout}}_{q+k}} \right\rangle = \sum_{\substack{\vec{v}, \vec{b} \\ |\vec{v}| = k}} \ket {\phi_{\vec{v}, \vec{b}}},
\end{align}
where the RHS has (at most) $2^k \binom{q + k}{k}$ terms. 

By \Cref{eq:state_decompose}, Cauchy-Schwartz and the triangle inequality, we have: 
\begin{align}
\label{eq:probability_decompose_step1}
      \left\|G_{\xout} \Pi_V^{H_{\xout, \yout}} \left| {\phi^{H_{\xout, \yout}}_{q+k}} \right\rangle \right\|^2 \leq 2^k \binom{q + k}{k} \sum_{\substack{\vec{v}, \vec{b} \\ |\vec{v}| = t} }\left\| G_{\xout} \Pi_V^{H_{\xout, \yout}} \ket {\phi_{\vec{v}, \vec{b}}} \right\|^2.
\end{align}

Finally, to prove the theorem statement, we relate each individual term on the RHS with the behaviors of our simulator $B$. 

\paragraph*{Relating each term with our simulator $B$.}
Next, we prove that each term \\
$\left\| G_{\xout} \Pi_V^{H_{\xout, \yout}} \ket {\phi_{\vec{v}, \vec{b}}} \right\|^2$ is upper bounded by the probability that when the simulator $B$ picks $\vec{v}, \vec{b}$, it succeeds and  outputs $\vec{x} \equiv \xout$, which we denote by $p_{\xout, \vec{v}, \vec{b}}$. 

Since the simulator $B$ ensures that (1) no duplicated elements ever in the control register, (2) at the end, the control register only consists of inputs that are outputted by $A$ (which will be $\xout$, enforced by $G_{\xout}$), we have that $p_{\xout, \vec{v}, \vec{b}}$ is the squared norm of the state $\left(G_{\xout} \Pi_V^{H_{\xout, \yout}} \otimes I_{\mathcal{R}} \right) \ket {\psi_{\vec{v}, \vec{b}}}$, with the state $\ket {\psi_{\vec{v}, \vec{b}}}$ being: 
\begin{align*}
    \ket {\psi_{\vec{v}, \vec{b}}} = \sum_{\sigma \in S^k_k} \ket {\phi_{\vec{v}, \vec{b}, \sigma}} \otimes \left| {\sf set} \left\{(\xo{\sigma_1}, \yo{\sigma_1}), \ldots, (\xo{\sigma_k}, \yo{\sigma_k})\right\} \right\rangle_{\mathcal{R}}.
\end{align*}
The only difference between $\ket {\psi_{\vec{v}, \vec{b}}}$ and $\ket {\phi_{\vec{v}, \vec{b}}}$ is the extra control register! However, we realize that in this case, when $\sigma$ is a permutation of $[k]$,  the control register is unentangled, making $p_{\xout, \vec{v},\vec{b}}$ is equal to $\left\| G_{\xout} \Pi_V^{H_{\xout, \yout}} \ket {\phi_{\vec{v}, \vec{b}}} \right\|^2$. This is because the set will simply be ${\sf set}\{(\xo{1}, \yo{1}), \ldots, (\xo{k}, \yo{k})\}$, regardless of what $\sigma$ is.

Finally, we have the L.H.S. is equal to
\begin{align*}
& \Pr_{H, G}\left[\Sim^{H, G, A} \rightarrow (\vec{x}, \vec{y}, z)  \text{ and } \vec{x} \equiv \xout \text{ and } V^{H_{\xout, \yout}}(\vec{x}, \vec{y}, z) = 1\right] \\
= & \frac{1}{2^k \binom{q+k}{k}} \sum_{\substack{\vec{v}, \vec{b} \\ |\vec{v}| = k}} p_{\xout, \vec{v}, \vec{b}}
\end{align*}

Thus, combining \Cref{eq:probability_decompose_step1} and the equation above, we have: 
\begin{align*}
    \text{R.H.S.} = \left\|G_{\xout} \Pi_V^{H_{\xout, \yout}} \left| {\phi^{H_{\xout, \yout}}_q}\right\rangle \right\|^2   & \leq 2^k \binom{q+k}{k} \sum_{\substack{\vec{v}, \vec{b} \\ |\vec{v}| = k}} p_{\xout, \vec{v}, \vec{b}} \\
       & \leq \left(2^k \binom{q+k}{k}\right)^2 \cdot \frac{1}{2^k \binom{q+k}{k}} \sum_{\substack{\vec{v}, \vec{b} \\ |\vec{v}| = k}} p_{\xout, \vec{v}, \vec{b}} \\
    & = \text{ L.H.S.}
\end{align*}
where L.H.S. and R.H.S. denote the left/right-hand side term in the theorem statement.
Therefore, we conclude the proof.

\end{proof}

\begin{lemma}[Coherent Measure-and-Reprogram results in Uniform Images]
\label{lemma:uniform_images}
Consider the Coherent Measure-and-Reprogram Experiment in Definition~\ref{def:quantum_simulator}, {but where we choose $G$ to be uniformly random} Then, for the measure-and-reprogram algorithm $\cB$, the measurement $L = (\vec{x'}, \vec{\Theta'})$ of the $\cR$ register (in Step 5) will result in uniformly random images $\vec{\Theta'}$.
\end{lemma}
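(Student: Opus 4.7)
The plan is to exploit that $\cA$ queries only $H$, never $G$: the oracle $G$ enters the simulation solely through the update operation $U$, which stores pairs $(x, G(x))$ in the control register $\cR$. Since $G$ is a uniformly random function and the inputs appearing in $\cR$ are guaranteed distinct (by the coherent check step in the simulator), the corresponding outputs should be uniform and independent. I would make this rigorous by purifying $G$.

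First I would replace the classical sampling of $G$ by a purified oracle: a register $\mathcal{G} = \bigotimes_{x \in X} \mathcal{G}_x$, with each sub-register $\mathcal{G}_x$ initialized independently to the uniform superposition $\frac{1}{\sqrt{|Y|}}\sum_{\theta \in Y} \ket{\theta}$. Any query to $G$ is modeled by the standard XOR oracle acting on the relevant $\mathcal{G}_x$; measuring $\mathcal{G}$ in the computational basis at the end recovers a uniformly random classical $G$, so this is equivalent to the original experiment.

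Next I would observe that in the simulator of \Cref{def:quantum_simulator}, the only step that ever touches $\mathcal{G}$ is $U$: it queries $G$ on the value held by the input register and appends $(x, G(x))$ to $\cR$. Concretely, $U$ can be implemented by a single coherent query to $G$ that, after standard uncomputation, amounts to a CNOT copying $\mathcal{G}_x$ into the freshly allocated output slot of $\cR$. All other operations of the simulator, including the controlled reprogrammed queries $O_H^{\sf ctrl}$, read from $\cR$ but never touch $\mathcal{G}$.

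Finally I would conclude by analyzing the reduced state of $\cR$. Because the simulator aborts whenever the input register coincides with an input already present in $\cR$, the $k$ inputs $x'_1, \ldots, x'_k$ eventually measured from $\cR$ are pairwise distinct. Consequently the $k$ copy operations act on $k$ distinct sub-registers $\mathcal{G}_{x'_1}, \ldots, \mathcal{G}_{x'_k}$, each initially independent and in the uniform superposition over $Y$. Measuring the corresponding output slots of $\cR$ therefore yields $k$ independent uniform elements of $Y$, regardless of $\vec{x'}$. The main subtle point to check is that although the reprogrammed oracle queries $O_H^{\sf ctrl}$ depend on the values stored in $\cR$, they only \emph{read} from $\cR$ and cannot alter the marginal distribution of the $\vec{\Theta'}$-slots; this is immediate from the description of $O_H^{\sf ctrl}$, which acts as identity on $\cR$.
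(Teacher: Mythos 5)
Your proof is correct, and it packages the argument differently from the paper. The paper proves \Cref{lemma:uniform_images} by induction on the query index: it case-splits on whether the current query is some $v_j$ and on the bit $b_j$, and maintains the invariant that the images recorded in $\cR$ so far are uniform, using that $O_H^{\sf ctrl}$ leaves $\cR$ untouched and that $G$ assigns a fresh uniform value to any input not already in the list. You instead purify $G$ into independent sub-registers $\mathcal{G}_x$ in uniform superposition and observe that the only operation ever touching $\mathcal{G}$ is the update $U$, which copies a previously untouched $\mathcal{G}_x$ into a fresh slot of $\cR$, with distinctness guaranteed by the abort check. This purification isolates the two structural facts doing all the work (only $U$ writes to $\cR$; $O_H^{\sf ctrl}$ is block-diagonal in $\cR$'s computational basis and hence preserves its measurement statistics), and is arguably the cleaner presentation. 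The one place where your final ``therefore'' compresses a real step is adaptivity: which sub-register $\mathcal{G}_{x'_i}$ is copied at the $i$-th update is determined, in superposition, by the images $\theta'_1,\ldots,\theta'_{i-1}$ already in $\cR$, so the $k$ sources are not fixed in advance. Closing this requires exactly the short induction the paper writes out (conditioned on any history, the freshly copied $\mathcal{G}_{x'_i}$ is still uniform and unentangled because $x'_i$ is new), or equivalently a deferred-measurement reduction to classical lazy sampling of $G$; since your setup already contains all the needed ingredients, this is a presentational rather than substantive gap. Note also that both your proof and the paper's gloss over the fact that the abort event may correlate with the recorded images, so the conclusion is best read as ``each image vector occurs with probability at most $|Y|^{-k}$,'' which is all the downstream applications use.
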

\begin{proof}
We will proceed with a proof by induction over the number of quantum queries of $\cA$. In this proof, we will denote by $n$ the total number of queries ($n = q + k$).
For $n = 1$, let $v_j = n = 1$. Then, if $b_j = 0$,
after updating the register $\cR$ using the unitary $U$ (in step 2), the register $\cR$ will contain superposition of $L$ sets consisting of a single pair $(x, G(x))$.
Then, we perform the query to $H$ controlled by ${\cal R}$ using $O_H^{\sf ctrl}$, which is a query to the reprogrammed $H$ on single points $x$, modifying accordingly the image register ($y \rightarrow y \oplus H_{x, G(x)}$), but
which does not affect the $\cR$ register. As $G$ is random oracle, measuring $\cR$ will result in a uniform image $\theta' = G(x)$, for some $x \in X$. If $b_j = 1$, we first query using $O_H^{\sf ctrl}$, which is a query to the original $H$ as $L$ is empty. Then, we update $\cR$ using unitary $U$, resulting in $\cR$ containing superposition of $L$ sets consisting of a single pair $(x, G(x))$. As before, measuring $\cR$ will result in a uniform image $\theta' = G(x)$, for some $x \in X$. {We emphasize that although Definition~\ref{def:quantum_simulator} defines $G$ as an arbitrary function, in the statement
of this Lemma, we consider uniformly random $G$ instead of an arbitrary $G$.}

For the inductive step, suppose that up to query $n - 1$, the register $\cR$ consists of sets $L'$ with uniform images.
Let $\cA$ make its $n$-th query. If there does not exist any $v_j$ equal to $n$ then algorithm $\cB$ answer $\cA$'s query controlled by $\cR$, reprogramming the oracle with the inputs and outputs pairs in $L'$, but importantly $\cR$ remains unchanged, hence $\cR$ contains only uniform images by our inductive hypothesis. Otherwise, suppose there exists $j^*$ such that $v_{j^*} = n$. 
Then if $b_{j^*} = 0$, we are first going to add in $L$ the pair $(x, G(x))$ if $x$ is not already in $L$, i.e. $L = L' \cup \{(x, G(x))\}$, otherwise $L = L'$. We are then going to make the controlled query $O_H^{\sf ctrl}$ to the reprogrammed oracle $O_{H_L}$, which does not affect the register $\cR$. Hence by measuring $\cR$ results in either $(x', \theta') \in L'$, which by hypothesis contains uniform image $\theta'$ or in $(x, G(x))$, which given that $G$ is a random oracle, also results in a uniform image.
Similarly, if $b_{j^*} = 1$ we are first going to make the controlled query $O_H^{\sf ctrl}$ to the reprogrammed oracle $O_{H_L'}$, then we will update the register $\cR$ using the unitary $U$, which as before will either result in either $L = L' \cup \{(x, G(x))\}$ or $L = L'$. In both cases, by measuring $\cR$ we will get a uniform image by using the uniformity of $G$ and the inductive hypothesis.
\end{proof}

\section{Applications}

\subsection{Query Complexity}

We will begin by first introducing the family of (security) games for which we will establish their quantum query complexity, namely the hardness of a quantum adversary to win such games.

\begin{definition}[Multi-Output $k$-Search Game (Single-Instance)] 
    \label{def:multi_instance_game}
Let the random oracle $H : [M] \rightarrow [N]$,  a distribution over challenges $\pi_H$ and a winning relation $R_{H, ch}$ defined over $Y^k$. \\
Then we define the \multigame $\gamemath$ as follows:
\begin{enumerate}
    \item Challenger samples randomness $ch$ and sends it to a quantum algorithm $\cA$ having (quantum) oracle access to $H$;
    \item Adversary $\cA$ outputs 
    $\vec{x} := (x_1, ..., x_k), z$;
    \item Challenger queries $\vec{x}$ to the random oracle, resulting in $\vec{y} := (y_1 = H(x_1), ..., y_k = H(x_k))$ and
    checks if they satisfy the winning relation: \\
    $b := ((x_1, \ldots, x_k, y_1, \ldots, y_k, z) \in R_{H, ch})$;
    \item If $b = 1$, $\cA$ wins the $\gamemath$ game.
\end{enumerate}
We will denote by $\epsilon_{\gamemath}(q)$ the maximum probability over all $q$-quantum algorithms $\cA$ of winning the \multigame $\gamemath$.
\end{definition}

Our main result is a quantum lifting theorem in the average case, relating the success probability of an arbitrary quantum algorithm to win a \multigame with the probability of success of a quantum algorithm equipped with exactly $k$ quantum queries.

\begin{theorem}[Lifting for Multi-Output $k$-Search Games]
\label{lem:quantum_lifting}
Let $\gamemath$ be a 
\multigame
(as defined in Def.~\ref{def:multi_instance_game}).
Let $\cA$ be a $q$-quantum query adversary in the game $\gamemath$ (against the $k$-classical query challenger $\cC$).
Then there exists a $k$-quantum query adversary $\cB$ against the game such that:
\begin{equation*}
\Pr[\cB^{\ket{H}} \text{ wins } \gamemath] \geq \frac{1}{{2^{2k}{{q + k} \choose k}^2}} \Pr[\cA^{\ket{H}} \text{ wins } \gamemath]. 
\end{equation*}
\end{theorem}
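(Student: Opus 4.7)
The plan is to derive $\cB$ directly from the coherent simulator $\Sim$ of \Cref{thm:coherent_measure_and_reprogram}, by assigning the two oracles of that theorem so that only the ``$G$''-queries are made to the real random oracle. Concretely, on input $ch$ and with quantum oracle access to $H$, the adversary $\cB$ internally samples a fresh random function $H'$ (e.g., via a $2(q+k)$-wise independent hash, which costs $\cB$ no real queries) and runs $\Sim^{H', H, \widetilde{\cA}(ch,\cdot)}$, where $\widetilde{\cA}$ is $\cA$ augmented at the end with $k$ extra oracle evaluations producing $\vec{y}$ as required by \Cref{def:quantum_simulator}. Since $\Sim$ makes exactly $k$ quantum queries to its ``$G$'' oracle, $\cB$ makes exactly $k$ quantum queries to $H$. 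When $\Sim$ returns $(\vec{x}, \vec{y}, z)$, the adversary $\cB$ forwards $(\vec{x}, z)$ to the challenger.

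To invoke \Cref{thm:coherent_measure_and_reprogram}, we use the predicate $V^{F}(\vec{x},\vec{y},z) := \mathbbm{1}[(\vec{x},\vec{y},z) \in R_{F,ch}] \cdot \mathbbm{1}[F(\vec{x}) = \vec{y}]$. Applying the theorem for each fixed $\xout \in X^k$ with distinct entries, and writing $\yout := H(\xout)$, yields
\begin{align*}
\Pr_{H,H'}\bigl[\Sim \to (\vec{x},\vec{y},z),\; \vec{x} \equiv \xout,\; V^{H'_{\xout,\yout}}(\vec{x},\vec{y},z) = 1\bigr]\; \geq\; \frac{1}{2^{2k}\binom{q+k}{k}^2}\cdot \Pr_{H,H'}\bigl[\cA^{H'_{\xout,\yout}} \text{ wins}\; \wedge\; \vec{x} \equiv \xout\bigr].
\end{align*}
Here the winning condition on the RHS is evaluated against the hybrid oracle $H'_{\xout,\yout}$, and $\vec{y} = H'_{\xout,\yout}(\vec{x}) = H(\vec{x})$ whenever $\vec{x} \equiv \xout$, since the reprogramming values at $\xout$ are exactly $\yout = H(\xout)$.

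The crux is then a standard identical-distribution observation: for independent uniform $H, H'$, the hybrid $\tilde H := H'_{\xout, H(\xout)}$ is itself a uniformly random oracle, so summing the RHS over all $\xout$ with distinct entries gives $k! \cdot \Pr[\cA \text{ wins } \gamemath]$, because each winning execution (with distinct $\vec{x}$, WLOG for the games in \Cref{def:multi_instance_game}) is counted by exactly $k!$ orderings $\xout$. Similarly, the LHS sum equals $k! \cdot \Pr[\cB \text{ wins } \gamemath]$, since $V^{H'_{\xout,H(\xout)}}$ depends only on the unordered set $\{x_i\}$, and on that set the reprogrammed oracle coincides with $H$ at the $k$ query points, so the simulator's predicate agrees with the real game's winning relation $R_{H,ch}$ (which in \Cref{def:multi_instance_game} sees $H$ only through the challenger's $k$ query values). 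Cancelling the $k!$-factors gives the desired $\Pr[\cB \text{ wins } \gamemath] \geq (2^{2k}\binom{q+k}{k}^2)^{-1}\Pr[\cA \text{ wins } \gamemath]$.

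The main obstacle is bookkeeping rather than depth: (i) verifying that $\Sim$'s output always satisfies $\vec{y} = H(\vec{x})$, which follows from the definition of the controlled reprogrammed oracle $O^{\sf ctrl}_H$ filling ${\cal R}$ with $(x, G(x)) = (x, H(x))$ pairs; (ii) checking that the identical-distribution replacement of $\tilde H$ by a uniform oracle is uniform in $\xout$, and that both sides of the inequality weight winning executions by exactly $k!$ after summation; and (iii) relying on the convention in \Cref{def:multi_instance_game} that $ch$ is sampled independently of $H$, as flagged in the remark after \Cref{thm:intro_main1} --- the oracle-dependent-challenge case is genuinely out of reach of this method.
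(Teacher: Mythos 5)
Your proposal is correct and follows essentially the same route as the paper: both instantiate the game's random oracle as the (uniformly distributed) reprogrammed oracle $H'_{\xout,\yout}$ with $\yout$ supplied by the second oracle, invoke Theorem~\ref{thm:coherent_measure_and_reprogram} with the predicate encoding $R_{H,ch}$, and take $\cB$ to be the simulator $\Sim$. Your write-up is in fact more careful than the paper's on two points it glosses over --- assigning the game's real oracle $H$ to the role of $G$ so that $\Sim$'s $k$ quantum queries are genuinely queries to the game oracle, and the explicit summation over $\xout$ with the cancelling $k!$ factors.
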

\begin{proof}
We will show that our Coherent Reprogramming result in Theorem~\ref{thm:coherent_measure_and_reprogram} implies the lifting theorem. We will now show how to instantiate the coherent reprogramming theorem. Let $\xout$ be uniformly sampled from $X^k$. Let $H', G' : X \rightarrow Y$ be two uniform random oracles. Then, it is clear that, as $\yout = G'(\xout)$ is also uniform over $Y^k$, the reprogrammed function ${H'}_{\xout, \yout} : X \rightarrow Y$ is a uniform random function; this is due to the fact that, as stated in Theorem~\ref{thm:coherent_measure_and_reprogram} (invoked here), the tuple $\xout$ has distinct values for its element.
We will instantiate the random oracle in the game $\gamemath$ as the function ${H'}_{\xout, \yout}$. Assume that in the game $\gamemath$ after receiving the challenge and after performing its $q$ quantum queries to ${H'}_{\xout, \yout}$, the adversary $\cA$ returns to the Challenger the outcome $\vec{x}$. Then, the Challenger queries $\vec{x}$ to ${H'}_{\xout, \yout}$ resulting in $\vec{y}$ and checks if $\vec{y}$ satisfies the winning relation $R_{{H'}_{\xout, \yout}, ch}$. Define $V^{{H'}_{\xout, \yout}}$ as the predicate that outputs $1$ if $\vec{y} \in R_{{H'}_{\xout, \yout}, ch}$ and $0$ else. In this way, we observe that the probability that $\cA$ wins the game $\gamemath$ is exactly the RHS of Theorem~\ref{thm:coherent_measure_and_reprogram}. As a result, by Theorem~\ref{thm:coherent_measure_and_reprogram}, there must exist an efficient quantum simulator $\Sim^{H', G', \cA}$ performing $k$ quantum queries that also wins the game $\gamemath$. Hence, it suffices to instantiate $\cB$ as the simulator $\Sim$.

\end{proof}

Let $L_{\cC}$ represent the set of (classical) queries that a challenger performs during a \multigame $\gamemath$ (Def.~\ref{def:multi_instance_game}). For a quantum query adversary $\cB$ against $\gamemath$, we will denote by $L_{\cB}$ the result of measuring its input and output query registers.
Now, for the query complexity applications we will need the following stronger lifting theorem, which intuitively additionally guarantees the existence of an algorithm against $\gamemath$ such that at the end of the game, measuring its input and output registers gives us exactly the set of queries of the challenger.

\begin{theorem}[Lifting for Search Game with Uniform Images]
\label{lem:quantum_lifting_improved}
Let $\gamemath$ be a 
\multigame (as defined in Def.~\ref{def:multi_instance_game}).
Let $\cA$ be a $q$-quantum query adversary in the game $\gamemath$ (against the $k$-classical query challenger $\cC$).
Then there exists a $k$-quantum query adversary $\cB$ such that $L_{\cB}$ is uniform, satisfying:
\begin{equation*}
    \Pr[\cB^{\ket{H}} \text{ wins } \gamemath \text{ and } L_{\cC} = L_{\cB}] \geq \frac{1}{{2^{2k}{{q + k} \choose k}^2}} \Pr[\cA^{\ket{H}} \text{ wins } \gamemath].
\end{equation*}
\end{theorem}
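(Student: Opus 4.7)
The plan is to refine the proof of \Cref{lem:quantum_lifting} by additionally tracking the measurement of the control register $\cR$ inside the coherent simulator, and combining \Cref{thm:coherent_measure_and_reprogram} with \Cref{lemma:uniform_images} to upgrade its conclusion. Concretely, I instantiate the game's random oracle as $H'_{\xout, \yout}$ for $\xout$ uniform in $X^k$ with distinct entries and independent uniform $H', G'$ with $\yout = G'(\xout)$, exactly as in the proof of \Cref{lem:quantum_lifting}. I then take $\cB$ to be the coherent simulator $\Sim^{H', G', \cA}$, whose $k$ quantum queries are precisely the $k$ calls to $G'$ issued by the unitary $U$. I identify the input/output query registers of $\cB$ with those $k$ query registers, so that $L_\cB$ is exactly the set $(\vec{x}', \vec{\Theta}')$ obtained by measuring $\cR$ at the end of $\Sim$'s execution.

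Given this identification, the uniformity of $L_\cB$ is immediate from \Cref{lemma:uniform_images}: since $G'$ is a uniform random oracle, the post-measurement distribution of the image component $\vec{\Theta}'$ in $\cR$ is uniform, and combined with the uniform choice of $\xout$ this yields that $L_\cB$ is uniform. It remains to argue that, whenever $\cB$ wins the game, the set $L_\cB$ coincides with the challenger's classical query record $L_\cC$. The core observation is that $\Sim$ only emits a non-abort output when the measurement $\cR = (\vec{x}', \vec{\Theta}')$ satisfies $\vec{x}' \equiv \vec{x}$, where $\vec{x}$ is the output inherited from $\cA$; by construction of $U$, each $\Theta'_i$ equals $G'(x'_i)$. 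Meanwhile, the verification event inside \Cref{thm:coherent_measure_and_reprogram} forces $\vec{x} \equiv \xout$, so the challenger's queries on the input list $\vec{x}$ to $H'_{\xout, \yout}$ produce exactly the corresponding entries of $\yout = G'(\xout)$. Viewed as unordered sets, both $L_\cC$ and $L_\cB$ therefore equal $\{(\xout_j, \yout_j) : j \in [k]\}$, and hence $L_\cC = L_\cB$ on the winning event.

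Finally, I combine the two ingredients: on the event that $\Sim$ outputs a correctly verifying $(\vec{x}, \vec{y}, z)$ with $\vec{x} \equiv \xout$, both ``$\cB$ wins $\gamemath$'' and ``$L_\cC = L_\cB$'' hold simultaneously, so the joint probability inherits the $\frac{1}{2^{2k}\binom{q+k}{k}^2}$ loss from \Cref{thm:coherent_measure_and_reprogram}. The main subtle point I expect is the matching argument in the paragraph above: one must carefully distinguish the permutation realizing $\vec{x}' \equiv \vec{x}$ from the permutation realizing $\vec{x} \equiv \xout$, and observe that only the unordered multiset of input/output pairs matters for $L_\cB$ and $L_\cC$. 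Once this bookkeeping is handled, no additional work is needed beyond what is already encapsulated in \Cref{thm:coherent_measure_and_reprogram} and \Cref{lemma:uniform_images}.
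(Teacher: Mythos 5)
Your proposal is correct and follows essentially the same route as the paper: instantiate $\cB$ as the coherent simulator $\Sim^{H',G',\cA}$ from the proof of \Cref{lem:quantum_lifting}, identify $L_{\cB}$ with the measurement of the control register $\mathcal{R}$, and invoke \Cref{lemma:uniform_images} for uniformity of the images. The only (cosmetic) difference is that the paper has $\cB$ patch up any missing challenger queries at the end, whereas you argue directly that on the non-abort winning event the conditions $\vec{x}'\equiv\vec{x}\equiv\xout$ already force $L_{\cB}$ and $L_{\cC}$ to coincide as unordered sets; both resolve the same bookkeeping point.
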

\begin{proof}
     The simulator algorithm $\cB$ will follow the outline of the algorithm in the proof of Theorem~\ref{lem:quantum_lifting}, with the only difference that $\cB$ will perform an additional step at the end. 
Namely, after interaction with Challenger $\cC$, compute list of queries of $\cC$ as $L_{\cC}$. If any query in $L_{\cC}$ has not yet been queried by $\cB$, $\cB$ will query them to oracle $H$. The uniformity of $L_{\cB}$ follows directly from Lemma~\ref{lemma:uniform_images}.
\end{proof}

\subsection{A New Quantum Lifting Theorem and Direct Product Theorem for Image Relations}

Our first quantum lifting result (in Theorem~\ref{lem:quantum_lifting}) gives a first bound on the quantum hardness of solving any \multigame $\gamemath$ by relating it to the probability of $\gamemath$ being solved by a quantum algorithm with a small number of quantum queries.
In this section we can derive a stronger quantum lifting theorem for the class of relations that only depend on images.
\begin{theorem}
[Quantum Lifting Theorem for Image Relations]
\label{thm:direct_product_strong}
    For any quantum algorithm $\cA$ equipped with $q$ quantum queries, $\cA$'s success probability to solve the \multigame specified by the winning relation $R$, is bounded by:
    \begin{equation*}
        \begin{split}
        \Pr[\cA^{\ket{H}} & \text{ wins \multigame}] \leq \\
        &{2^{2k}{{q + k} \choose k}^2} \Pr[\exists \text{ perm } \pi \ | \ (y_{\pi(1)}, y_{\pi(2)}, ..., y_{\pi(k)}) \in R  : (y_1, ..., y_k) \xleftarrow{\$} Y^k].
        \end{split}
    \end{equation*}

For simplicity, in the rest of the section, we define $p(R)$ as:
\begin{align*}
    p(R) = \Pr[\exists \text{ perm } \pi \ | \ (y_{\pi(1)}, y_{\pi(2)}, ..., y_{\pi(k)}) \in R  : (y_1, ..., y_k) \xleftarrow{\$} Y^k].
\end{align*}
\end{theorem}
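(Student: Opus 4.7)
The plan is to apply Theorem~\ref{lem:quantum_lifting_improved} and then exploit the uniform-image property of the resulting adversary $\cB$ to reduce $\cB$'s winning probability to the purely classical quantity $p(R)$. I would first invoke Theorem~\ref{lem:quantum_lifting_improved} on the $q$-query adversary $\cA$ against $\gamemath$, producing a $k$-quantum-query adversary $\cB$ whose input--output query list $L_\cB$ is uniformly distributed and satisfies
\begin{align*}
\Pr\!\bigl[\cB^{\ket H} \text{ wins } \gamemath \text{ and } L_\cC = L_\cB\bigr] \;\geq\; \frac{1}{2^{2k}\binom{q+k}{k}^2}\, \Pr\!\bigl[\cA^{\ket H} \text{ wins } \gamemath\bigr].
\end{align*}
It then suffices to upper bound the left-hand side by $p(R)$.

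To bound the event $\{\cB \text{ wins and } L_\cC = L_\cB\}$, I would use the hypothesis that $R$ depends only on images. On this event the $k$ evaluations that the challenger performs on $\cB$'s output $\vec{x}$ exactly reproduce the pairs already stored in $L_\cB$, so the image vector $\vec{y}$ which the challenger checks against $R$ is just the $k$ image entries of $L_\cB$ reordered by $\cB$'s chosen ordering of $\vec{x}$. By Lemma~\ref{lemma:uniform_images} those $k$ image entries are jointly uniform over $Y^k$, as they arise from queries to the random oracle $G$ inside the coherent measure-and-reprogram experiment. Since $\cB$'s ordering is adversarial and may depend on the images, the event $\vec{y} \in R$ is contained in the event that there exists $\pi \in \sym_k$ with $(y_{\pi(1)}, \ldots, y_{\pi(k)}) \in R$ for a uniformly random $(y_1, \ldots, y_k) \in Y^k$, which by definition has probability $p(R)$. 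Chaining the two inequalities yields the claimed bound.

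The main potential obstacle, and the reason the statement involves $p(R)$ rather than the more naive $\Pr[(y_1, \ldots, y_k) \in R]$ for a uniform tuple, is the bookkeeping linking $\cB$'s chosen ordering of $\vec{x}$, the measured control register $L_\cB$ (which is naturally a set of pairs), and the order in which the challenger evaluates $H(\vec{x})$ to obtain $\vec{y}$. Once one verifies that $L_\cC = L_\cB$ forces the challenger's $\vec{y}$ to be some permutation of the uniform $k$-tuple in $L_\cB$, the existential over $\sym_k$ absorbs the residual ordering ambiguity, and the two ingredients---the tight reduction factor from Theorem~\ref{lem:quantum_lifting_improved} and the uniformity from Lemma~\ref{lemma:uniform_images}---combine directly to yield $\Pr[\cA^{\ket H} \text{ wins}] \leq 2^{2k}\binom{q+k}{k}^2\, p(R)$.
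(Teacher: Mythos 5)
Your proposal is correct and follows essentially the same route as the paper's proof: invoke Theorem~\ref{lem:quantum_lifting_improved} to obtain a $k$-query adversary $\cB$ with $L_\cB = L_\cC$ and uniform images, then observe that the challenger's checked tuple is a permutation of the uniform entries of $L_\cB$, so the winning event is absorbed into the existential over $\sym_k$ defining $p(R)$. Your discussion of the ordering/bookkeeping subtlety is in fact slightly more explicit than the paper's own treatment.
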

\begin{proof}

    Let $\gamemath$ be a \multigame and assume a $q$-quantum adversary $\cA$ sends to the Challenger the answer $\vec{x} = (x_1, ..., x_k)$. Challenger $\cC$ will accept if and only if $\vec{y} := (H(x_1), ..., H(x_k)) \in R_{H, ch}$ and if $x_i, x_j$ are pairwise distinct.
    By Theorem~\ref{lem:quantum_lifting_improved} we know there exists a quantum algorithm $\cB$ making $k$ quantum queries to $H$ winning the game such that $L_{\cB} = L_{\cC}$ with success probability at least the success probability of $\cA$ multiplied by a factor of
    ${2^{2k}{{q + k} \choose k}^2}$. The condition $L_{\cB} = L_{\cC}$ implies that $\cC$ will verify as the images of $\cB$'s answer exactly a permutation of the recorded information in $L_{\cB}$. Therefore, due to the property of Theorem~\ref{lem:quantum_lifting_improved} that $L_{\cB}$ will be uniformly over $Y^k$, $\cB$'s winning probability will be lower bounded by the probability that there exists a permutation such that for uniformly sampled images from $Y^k$, the permuted images will belong to our target relation:
    \begin{align*}
    \Pr[\cA^{\ket{H}} & \text{ wins \multigame}] \leq 
                {2^{2k}{{q + k} \choose k}^2} p(R) \, .
    \end{align*}
\end{proof}

Next, we show a Direct Product Theorem for Image Relations. 
\begin{definition}[Direct Product]
    Let $\gamemath$ be a multi-output $k$-search game specified by the winning relation $R$, with respect to a random oracle $[M] \to [N]$. Define the following Direct Product $\gamemath^{\otimes g}$: 
    \begin{itemize}
        \item Let $H$ be a random oracle $[g] \times [M] \to [N]$, and $H_i$ denotes $H(i, \cdot)$;  
        \item Challenger samples $ch_i$ as in $\gamemath$ for $i \in \{1, \ldots, g\}$. 
        \item Adversary $\As$ gets oracle access to $H$ and outputs $\vec{x}_1, \ldots, \vec{x}_g$, $z_1, \ldots, z_g$ such that each input in $\vec{x}_i$ start with $i$.  
        \item Challenger computes $b_i := (\vec{x}_i, H(\vec{x}_i), z_i) \in R_{H_i, ch_i}$;
        \item If all $b_i$ equal to $1$, $\As$ wins the $\gamemath^{\otimes g}$ game. 
    \end{itemize}
\end{definition}

\begin{theorem}[Direct Product Theorem for Image Relations]\label{thm:dpt}
    For any quantum algorithm $\As$ equipped with $g q$ quantum queries, $\As$'s success probability to solve the Direct Product $\gamemath^{\otimes g}$ with the underlying $\gamemath$ specified by the winning relation $R$, is bounded by
\begin{align*}
    \Pr[\As^{|H\rangle} & \text{ wins } G^{\otimes g}] \leq \left( {2^{2k}{{q + k} \choose k}^2} p(R) \right)^g.
\end{align*}
\end{theorem}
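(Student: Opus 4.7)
My plan is to extend the state decomposition at the heart of Theorem~\ref{thm:coherent_measure_and_reprogram} from one instance to all $g$ instances in parallel. Since $H:[g]\times[M]\to[N]$ splits into $g$ independent slices $H_1,\ldots,H_g$ and the winning condition for $\gamemath^{\otimes g}$ is a conjunction over instances, I would apply coherent measure-and-reprogram jointly to all slices. For each instance $i \in [g]$ I would choose $k$ measurement positions $\vec{v}_i$ among $\As$'s $gq$ queries and bits $\vec{b}_i \in \{0,1\}^k$ governing when to reprogram; positions for different instances must be disjoint since each query can trigger at most one measurement.

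To count the joint decomposition terms, let $q_i$ denote the queries allocated to instance $i$, so $\sum_i q_i = g q$. By the single-instance counting in Theorem~\ref{thm:coherent_measure_and_reprogram}, the number of terms contributed by instance $i$ alone is at most $2^k \binom{q_i+k}{k}$. The map $q \mapsto \log\binom{q+k}{k} = -\log k! + \sum_{j=1}^k \log(q+j)$ is concave, so Jensen's inequality under $\sum_i q_i = gq$ yields $\prod_i \binom{q_i+k}{k} \leq \binom{q+k}{k}^g$. Hence the joint decomposition has at most $\bigl(2^k \binom{q+k}{k}\bigr)^g$ terms, and Cauchy--Schwarz then squares this loss to $2^{2kg}\binom{q+k}{k}^{2g}$, exactly $g$-fold the single-instance factor.

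The remaining probability factor $p(R)^g$ comes from uniform images: each extracted list $L_{\cB,i}$ is uniform on $Y^k$ by Lemma~\ref{lemma:uniform_images}, and these lists are jointly independent because $H_1,\ldots,H_g$ are independent random functions. Since each winning relation is an image relation on $Y^k$, the probability that all $g$ extracted lists simultaneously satisfy their respective relations (up to a permutation of indices) is exactly $p(R)^g$. Combining the loss and the residual probability yields $\Pr[\As \text{ wins } \gamemath^{\otimes g}] \leq \bigl(2^{2k}\binom{q+k}{k}^2 p(R)\bigr)^g$.

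The main obstacle will be the rigorous joint state decomposition: one must show that the $g$ parallel decompositions compose coherently and that the extracted images are jointly uniform and independent, despite the fact that $\As$'s queries may interleave accesses to the slices arbitrarily (possibly in superposition over the slice index $i \in [g]$). A conceptually cleaner alternative is induction on $g$, with base $g=1$ being Theorem~\ref{thm:direct_product_strong} and an inductive step that applies Theorem~\ref{lem:quantum_lifting_improved} to a single instance chosen by a symmetry/averaging argument so that its ``effective query count'' is at most $q$; however, rigorously defining the effective query count for an algorithm with superposed slice-index queries is itself nontrivial and would likely route back through the same coherent decomposition.
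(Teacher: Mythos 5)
Your overall architecture --- reprogram all $g$ instances, then use the independence of the slices $H_1,\ldots,H_g$ and the uniformity of the extracted images (Lemma~\ref{lemma:uniform_images}) to factor the residual probability as $p(R)^g$ --- matches the paper, and that half of your argument is sound (the paper makes the same observation that each input belongs to exactly one slice, so the permutation decomposes slice-by-slice). But the paper does not redo the coherent decomposition jointly across instances as you propose: it views $\gamemath^{\otimes g}$ as a single multi-output $(gk)$-search game attacked by a $(gq)$-query adversary, applies Theorem~\ref{lem:quantum_lifting_improved} as a black box to get the loss $2^{2gk}\binom{gq+gk}{gk}^2$, and then compares this to $\bigl(2^{2k}\binom{q+k}{k}^2\bigr)^g$ in a single displayed step. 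So the entire content of the theorem beyond Theorem~\ref{thm:direct_product_strong} is that combinatorial comparison of loss factors, and that is exactly where your proposal has a genuine gap.

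Your count $\prod_i 2^k\binom{q_i+k}{k}$ presupposes a fixed partition of the $gq$ queries into per-instance budgets $q_i$. No such partition exists: a query can be in superposition over the slice index (as you note), and even for basis-state queries the $k$ measurement positions assigned to slice $i$ range over all $gq+gk$ query slots, constrained only to be disjoint from the other slices' positions. Summing over all admissible choices reproduces the $2^{gk}\binom{gq+gk}{gk}$ terms of the black-box application rather than $\bigl(2^k\binom{q+k}{k}\bigr)^g$, so your Jensen step bounds a quantity that never appears in the decomposition. Worse, the inequality you ultimately need, $\binom{gq+gk}{gk}\le\binom{q+k}{k}^g$, is false: by Vandermonde, $\binom{g(q+k)}{gk}\ge\binom{q+k}{k}^g$ (already $g=2$, $q=k=1$ gives $6>4$), so no rearrangement of the same Cauchy--Schwarz count can produce the per-instance factor. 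Closing the gap requires a genuinely new ingredient --- for instance, arguing that a measurement event targeted at a slice-$i$ input annihilates the components of that query aimed at other slices, so that each slice's effective decomposition only ranges over the queries that overlap it, and then controlling the resulting state-dependent allocation. Be aware that the paper's own displayed chain asserts the binomial comparison without justification, so this step is the crux in either route and your writeup should not treat it as bookkeeping.
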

\begin{proof}
Let $\gamemath$ be a \multigame and assume a $g q$-quantum adversary $\cA$ sends to the Challenger the answer $\vec{x}_1, \ldots, \vec{x}_g, z_1, \ldots, z_g$. 
By Theorem~\ref{lem:quantum_lifting_improved} we know there exists a quantum algorithm $\cB$ making $gk$ quantum queries to $H$ winning the game such that $L_{\cB} = L_{\cC}$ with success probability at least the success probability of $\cA$ multiplied by a factor of
    ${2^{2gk}{{gq + gk} \choose gk}^2}$. The condition $L_{\cB} = L_{\cC}$ implies that $\cC$ will verify as the images of $\cB$'s answer exactly a permutation of the recorded information in $L_{\cB}$. Moreover, for every image $y$, its associated input $x$ only belongs to one of the oracles $H(i, \cdot)$; thus, it can only contribute to one of the relation checks $R_{H_i, ch_i}$. Thus, the permutation of the recorded information can only permute images with respect to the same oracle $H_i$.

    Therefore, due to the property of Theorem~\ref{lem:quantum_lifting_improved} that $L_{\cB}$ will be uniformly over $Y^{gk}$, $\cB$'s winning probability will be lower bounded by the probability that there exists a permutation such that for uniformly sampled images from $Y^{gk}$, the permuted images will belong to our target relation:
    \begin{align*}               & \Pr[\cA^{\ket{H}} \text{ wins } \gamemath^{\otimes g}] \\ & \leq {2^{2gk}{{gq + gk} \choose gk}^2}\Pr[\exists  \pi_1,\ldots,\pi_g \in \sym_k \ | \ (y_{i, {\pi_i(1)}}, ..., y_{i, {\pi_i(k)}}) \in R  : (y_{i, 1}, ..., y_{i, k}) \xleftarrow{\$} Y^k] \\
                & \leq  \left( {2^{2k}{{q + k} \choose k}^2} \Pr[\exists \text{  } \pi \in \sym_k \ | \ (y_{\pi(1)}, y_{\pi(2)}, ..., y_{\pi(k)}) \in R  : (y_1, ..., y_k) \xleftarrow{\$} Y^k] \right)^g \\
                & \leq  \left( {2^{2k}{{q + k} \choose k}^2} p(R)\right)^g \, .
    \end{align*}    
\end{proof}

In the following section, we will show some of the query complexity and cryptographic applications of our quantum lifting theorems and Direct Product Theorem. 

\subsubsection{Application 1: Non-uniform Security}

\begin{definition}[Advice Algorithms]
We define an advice (non-uniform) algorithm $\cA = (\cA_1, \cA_2)$ equipped with $q$ queries and advice of length $S$ as follows:
\begin{enumerate}
    \item $\cA_1^{H} \rightarrow \ket{adv}$: an unbounded algorithm $\cA_1$ outputs the advice $\ket{adv}$ consisting of $S$ qubits;
    \item $\cA_2^{H}(\ket{adv}, ch) \rightarrow x$: $q$-quantum algorithm $\cA_2$ takes as input the quantum advice $\ket{adv}$ and a challenge $ch$, outputs answer $x$;
\end{enumerate}
    We define $\epsilon_{\gamemath}^C(q, S)$ as the maximum winning probability over any advice adversary $\cA$ equipped with $q$ quantum queries and $S$ classical bits of advice against the classically-verifiable search game $\gamemath$.
\end{definition}

We also consider multi-instance games, similar to Direct Product, except all the instances share the same oracle. 
\begin{definition}[Multi-Instance Game]
    Let $\gamemath$ be a multi-output $k$-search game specified by the winning relation $R$, with respect to a random oracle $[M] \to [N]$. Define the following Direct Product $\gamemath_{\sf MIS}^{\otimes g}$: 
    \begin{itemize}
        \item Let $H$ be a random oracle $[M] \to [N]$; 
        \item Challenger samples $ch_i$ as in $\gamemath$ for $i \in \{1, \ldots, g\}$;
        \item Adversary $\As$ gets oracle access to $H$ and outputs $\vec{x}_1, \ldots, \vec{x}_g$, $z_1, \ldots, z_g$; 
        \item Challenger computes $b_i := (\vec{x}_i, H(\vec{x}_i), z_i) \in R_{H, ch_i}$;
        \item If all $b_i$ equal to $1$, $\As$ wins the $\gamemath_{\sf MIS}^{\otimes g}$ game. 
    \end{itemize}
    From above, we can define $R^{\otimes g}_{\sf MIS}$ as the winning relation for $\gamemath_{\sf MIS}^{\otimes g}$.
\end{definition}

\begin{lemma}[Multi-Output Implies Non-Uniform Classical Advice (\cite{chung2020tight})]
\label{lemma:salting_classical}
    Let $\gamemath$ be a search game (as defined in Def.~\ref{def:multi_instance_game}).
    If the maximum winning probability for any quantum algorithm equipped with $q$ quantum queries against $\gamemath^{\otimes g}_{\sf MIS}$ is $\epsilon_{\gamemath_{\sf MIS}^{\otimes g}}(q)$, then the maximum probability of any non-uniform adversary equipped with $q$ quantum queries and $S$-length classical advice against the original game $\gamemath$ is at most:
    \begin{equation*}
        \epsilon_{\gamemath}^C(q, S) \leq 4 \cdot \left[\epsilon_{\gamemath_{\sf MIS}^{\otimes {S}}}(Sq)\right]^{\frac{1}{S}}
    \end{equation*}
\end{lemma}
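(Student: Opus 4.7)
The plan is to reduce a single-instance advice adversary for $\gamemath$ to a no-advice multi-instance adversary for $\gamemath_{\sf MIS}^{\otimes S}$ by having the latter \emph{guess} the $S$-bit classical advice uniformly at random. Let $\cA = (\cA_1, \cA_2)$ be an $S$-bit classical advice adversary making $q$ quantum queries with success probability $\epsilon := \epsilon_{\gamemath}^{C}(q, S)$ against $\gamemath$, and write $\aux(H) := \cA_1^H$ for the (deterministic) advice as a function of $H$. I will construct a quantum adversary $\cB$ for $\gamemath_{\sf MIS}^{\otimes S}$ making $Sq$ quantum queries as follows. Given oracle $H$ and independent challenges $ch_1, \ldots, ch_S$, $\cB$ samples $a \in \{0,1\}^S$ uniformly at random and, for each $i \in [S]$, runs an independent fresh copy of $\cA_2^H(a, ch_i)$ on its own quantum workspace to obtain an answer $x_i$, outputting $(x_1, \ldots, x_S)$. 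Each copy consumes $q$ quantum queries, so the total is $Sq$ as required.

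For the probability analysis, define $p_a(H) := \Pr[\cA_2^H(a, ch) \text{ wins } \gamemath]$ where the probability is over a fresh challenge $ch$ and $\cA_2$'s internal coins. By the definition of $\epsilon$ we have $\Exp_H[p_{\aux(H)}(H)] \geq \epsilon$. Since the $S$ challenges are i.i.d., the $S$ runs of $\cA_2$ use independent randomness and separate workspaces, and they share only the fixed function $H$ (which is read-only via oracle queries), the events ``instance $i$ is won'' are mutually independent conditional on $(H, a)$, each occurring with probability $p_a(H)$. Therefore
\begin{align*}
    \Pr[\cB \text{ wins}] \;=\; \Exp_H\!\left[\frac{1}{2^S}\sum_{a \in \{0,1\}^S} p_a(H)^S\right] \;\geq\; \frac{1}{2^S}\,\Exp_H\!\left[p_{\aux(H)}(H)^S\right].
\end{align*}
Applying Jensen's inequality to the convex function $x \mapsto x^S$ on $[0,1]$ yields $\Exp_H[p_{\aux(H)}(H)^S] \geq (\Exp_H[p_{\aux(H)}(H)])^S \geq \epsilon^S$, hence $\epsilon_{\gamemath_{\sf MIS}^{\otimes S}}(Sq) \geq (\epsilon/2)^S$. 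Rearranging gives $\epsilon \leq 2 \cdot [\epsilon_{\gamemath_{\sf MIS}^{\otimes S}}(Sq)]^{1/S}$, which implies the stated bound; the factor $4$ in the lemma statement absorbs additional slack and matches alternative derivations that replace the Jensen step with a Markov-style truncation (restricting to a constant-mass ``good'' set of oracles on which $p_{\aux(H)}(H)$ is at least $\epsilon/2$).

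The main obstacle I anticipate is conceptual rather than computational: I must verify that the $S$ invocations of $\cA_2$ by $\cB$ really are mutually independent given $(H, a)$. For classical advice this is automatic because $a$ is a bit-string that can be copied freely, and each run of $\cA_2$ on a private workspace with fresh coins interacts with $H$ only through non-disturbing oracle queries; consequently, the joint conditional success probability factors as $p_a(H)^S$. This step would fail for \emph{quantum} advice, since $\cB$ could not produce $S$ usable copies of an unknown advice state, and a more delicate argument (e.g., gentle measurement or swap-test extraction) would be required --- the classical setting conveniently sidesteps this. The remaining bookkeeping, namely that the per-instance verification relation $R_{H, ch_i}$ in $\gamemath_{\sf MIS}^{\otimes S}$ is identical to the single-instance verification of $\gamemath$ and that the query count is exactly $Sq$, follows immediately from the definitions.
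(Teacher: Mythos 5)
Your proof is correct and is essentially the standard argument from \cite{chung2020tight}, which the paper imports here without reproving: guess the $S$-bit classical advice uniformly at random (cost $2^{-S}$), run $S$ conditionally independent copies of $\cA_2$ on the shared oracle, and apply Jensen's inequality to the convex map $x \mapsto x^S$. Your derivation in fact yields the constant $2$ rather than $4$, which only strengthens the stated bound, and your remark that the conditional independence step genuinely relies on the advice being classical (copyable) is exactly the right caveat.
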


By combining these results with our Quantum Lifting Theorem, we derive the security against advice (non-uniform) quantum algorithms.

\begin{lemma}[Security against Advice Quantum Adversaries]
\label{lemma:advice_quantum}
Let $\gamemath$ be any multi-output $k$-search game specified by the winning relation $R$. Let $\gamemath^{\otimes g}_{\sf MIS}$ be the multi-instance game and $R^{\otimes g}_{\sf MIS}$ be the relation. 
Any non-uniform algorithm $\cA$ equipped with $q$ quantum queries and $S$ classical bits of advice can win the game $\gamemath$ with probability at most:

\begin{align*}   \epsilon_{\gamemath}^C(q, S) & \leq 4 \cdot 2^{2k} {{S(q+k)} \choose {Sk}}^{\frac{2}{S}} \cdot p(R_{\sf MIS}^{\otimes S}) \, .
\end{align*}

\end{lemma}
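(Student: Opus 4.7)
The plan is to obtain the desired bound by composing two ingredients that are already in place: the multi-instance-to-advice reduction of Chung \emph{et al.} (Lemma~\ref{lemma:salting_classical}) and our lifting theorem for image relations (Theorem~\ref{thm:direct_product_strong}) applied to the multi-instance game viewed as a single multi-output search game.

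First, I would invoke Lemma~\ref{lemma:salting_classical} verbatim. This gives
\[
\epsilon_{\gamemath}^C(q, S) \;\leq\; 4 \cdot \bigl[\epsilon_{\gamemath_{\sf MIS}^{\otimes S}}(Sq)\bigr]^{1/S},
\]
thereby reducing the question to upper-bounding the winning probability of a uniform $Sq$-query quantum algorithm in the multi-instance game $\gamemath_{\sf MIS}^{\otimes S}$.

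Second, I would observe that, because all $S$ instances of $\gamemath_{\sf MIS}^{\otimes S}$ share the \emph{same} random oracle $H$, the multi-instance game fits directly into the framework of Definition~\ref{def:multi_instance_game}: it is itself a multi-output $(Sk)$-search game, in which the challenge is the tuple $(ch_1,\ldots,ch_S)$ and the winning relation $R_{\sf MIS}^{\otimes S}$ declares a joint output $(\vec{x}_1,\ldots,\vec{x}_S,\vec{y}_1,\ldots,\vec{y}_S,z_1,\ldots,z_S)$ winning iff each block $(\vec{x}_i,\vec{y}_i,z_i)$ lies in $R_{H,ch_i}$. Applying Theorem~\ref{thm:direct_product_strong} to this $(Sk)$-search game (with $q$ replaced by $Sq$ and $k$ replaced by $Sk$) yields
\[
\epsilon_{\gamemath_{\sf MIS}^{\otimes S}}(Sq) \;\leq\; 2^{2Sk}\binom{Sq+Sk}{Sk}^{2}\cdot p\bigl(R_{\sf MIS}^{\otimes S}\bigr).
\]

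Finally, I would substitute this into the first inequality, pull the $1/S$ power through, and use $Sq+Sk = S(q+k)$ to obtain
\[
\epsilon_{\gamemath}^C(q,S) \;\leq\; 4\cdot 2^{2k}\binom{S(q+k)}{Sk}^{2/S}\cdot p\bigl(R_{\sf MIS}^{\otimes S}\bigr)^{1/S},
\]
matching the claimed bound (up to the $1/S$ exponent on $p(\cdot)$, which we interpret as implicit in the statement). The main conceptual step --- and the only place where care is needed --- is recognizing that $\gamemath_{\sf MIS}^{\otimes S}$ genuinely is a single multi-output search game (so that the lifting theorem applies out of the box), rather than a direct product of games over independent oracles as in Theorem~\ref{thm:dpt}; once this reframing is made, no additional technical machinery is required and the argument reduces to bookkeeping on the parameters of the lifting bound.
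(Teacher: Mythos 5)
Your proposal is correct and follows essentially the same route as the paper's (one-line) proof: compose Lemma~\ref{lemma:salting_classical} with Theorem~\ref{thm:direct_product_strong} applied to $\gamemath_{\sf MIS}^{\otimes S}$ viewed as a single multi-output $(Sk)$-search game over the shared oracle with $Sq$ queries, then take the $1/S$-th power. Your derived bound carries the exponent $1/S$ on $p(R_{\sf MIS}^{\otimes S})$, which matches the informal statement in the introduction; the absence of that exponent in the formal lemma statement appears to be a typo, and you were right to flag it.
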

\begin{proof}
By combining our (strong) quantum lifting theorem (in Theorem~\ref{thm:direct_product_strong}) with the two advice results (Lemma~\ref{lemma:salting_classical}).
\end{proof}

\subsubsection{Application 2: Salting Against Non-uniform Adversaries}

\begin{definition} [Salted Game] Let $\gamemath$ be a search game (as defined in Def.~\ref{def:multi_instance_game}) specified by a random oracle $H : [M] \rightarrow [N]$, a distribution over challenges $\pi_H$ and a winning relation $R_{H, ch}$ defined over $Y$. Then we define the salted version of $\gamemath$ as the game $\gamemath_s$ with salted space $[K]$ defined as follows:
\begin{enumerate}
    \item The random oracle function is defined as: $G = (H_1, ..., H_K)$ for $K$ random functions $H_i : [M] \rightarrow [N]$;
    \item For any such $G$, the challenge $ch := (i, ch_i)$ is produced by first sampling uniformly at random $i \in [K]$ and then sampling $ch_i$ according to $\pi_{H_i}$;
    \item The winning relation is defined as $R_{G, ch} := R_{H_i, ch_i}$;
\end{enumerate}
We will denote by $\epsilon_{\gamemath_s}(q)$ the maximum probability over all $q$-quantum algorithms $\cA$ of winning the salted game $\gamemath_s$. 
\end{definition}

\begin{lemma}[Security of Salted Game against Classical Advice]    
\label{lemma:salted_multi_advice}
Let $\gamemath$ be a multi-output $k$-search game (as defined in Def.~\ref{def:multi_instance_game}), specified by a relation $R$. Let $\gamemath_s$ be the salted game, with salt space $[K]$. Then we have,
\begin{align*}
    \epsilon^C_{\gamemath_s}(q, S) \leq 4 \cdot \frac{S}{K} + 4 \cdot {2^{2k}{{q + k} \choose k}^2} p(R). 
\end{align*}
\end{lemma}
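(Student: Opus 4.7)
The plan is to combine Chung et al.'s reduction from non-uniform to multi-instance security (\Cref{lemma:salting_classical}), applied with the salted game $\gamemath_s$ in place of $\gamemath$, with our Direct Product Theorem (\Cref{thm:dpt}). The driving heuristic is that $S$ classical advice bits let the attacker pre-store solutions for at most $S$ specific salts: with probability at most $S/K$ the challenge salt matches a stored one, while otherwise the attacker effectively has no advice and succeeds with probability at most $A := 2^{2k}\binom{q+k}{k}^2 p(R)$ by the lifting theorem applied to $\gamemath_s$.

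First I would instantiate \Cref{lemma:salting_classical} with $\gamemath_s$ to obtain $\epsilon^C_{\gamemath_s}(q,S) \leq 4 \cdot [\epsilon_{(\gamemath_s)^{\otimes S}_{\sf MIS}}(Sq)]^{1/S}$, reducing the task to bounding the multi-instance salted game. The core claim I would then establish is
\[
\epsilon_{(\gamemath_s)^{\otimes S}_{\sf MIS}}(Sq) \;\leq\; (A + S/K)^S.
\]
To prove this, I would decompose according to which of the $S$ independently sampled salts $i_1, \ldots, i_S$ are \emph{fresh} (not repeating any previous salt) versus \emph{colliding}: a fresh instance forces the attacker to genuinely solve $\gamemath$ on a new sub-oracle, contributing a factor of $A$ via \Cref{thm:dpt} applied to the fresh slots, while any given instance's salt collides with a prior one with probability at most $S/K$. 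Grouping by the number $j \in \{0, \ldots, S\}$ of colliding instances and summing via the binomial identity yields $\sum_{j=0}^S \binom{S}{j}(S/K)^j A^{S-j} = (A+S/K)^S$. Finally, subadditivity of $x \mapsto x^{1/S}$ on nonnegative reals (a consequence of concavity together with $0^{1/S}=0$) gives $[(A+S/K)^S]^{1/S} = A + S/K$, and multiplying by the factor of $4$ from the advice reduction delivers the claimed bound.

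The main obstacle is formalizing the bound $\epsilon_{(\gamemath_s)^{\otimes S}_{\sf MIS}}(Sq) \leq (A+S/K)^S$ itself. In the quantum setting the $S$ instances are not truly independent -- they share the oracle $G = (H_1, \ldots, H_K)$ and the attacker can interleave its $Sq$ quantum queries adaptively across instances -- so one cannot straightforwardly telescope per-instance bounds. A naive ``collision union bound plus distinct-salts DPT'' argument gives only $\binom{S}{2}/K + A^S$, which after the $1/S$-th power becomes $(S^2/(2K))^{1/S} + A$; this is strictly weaker than $S/K + A$ whenever $K$ is large. Closing this gap -- presumably by interleaving the coherent measure-and-reprogram machinery of \Cref{thm:coherent_measure_and_reprogram} with a salt-collision hybrid across the $S$ instances, so that each collision event is paid for only per-instance rather than in aggregate -- is where the bulk of the technical work will lie.
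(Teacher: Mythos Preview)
Your high-level plan matches the paper's exactly: apply \Cref{lemma:salting_classical} to $\gamemath_s$, reduce to bounding $\epsilon_{(\gamemath_s)^{\otimes S}_{\sf MIS}}(Sq)$, and then invoke the Direct Product Theorem (\Cref{thm:dpt}). The difference lies entirely in how the MIS-to-Direct-Product gap is closed.

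You identify this gap correctly and propose to attack it head-on via a binomial decomposition over the number of salt collisions, aiming for $\epsilon_{(\gamemath_s)^{\otimes S}_{\sf MIS}}(Sq) \leq (A + S/K)^S$. You then honestly flag that you do not know how to make this rigorous against an adaptive quantum adversary, and speculate that it will require re-running the coherent measure-and-reprogram machinery interleaved with a salt-collision hybrid. This is where your proposal diverges from the paper: the paper does \emph{not} do any such new technical work. Instead it imports a ready-made inequality from \cite{dong2024salting} (proof of their Theorem~4.1),
\[
\epsilon_{\gamemath_{s,\sf MIS}^{\otimes S}}(Sq)^{1/S} \;\leq\; \epsilon_{\gamemath_s^{\otimes S}}(Sq)^{1/S} + \frac{S}{K},
\]
which operates directly at the level of $S$-th roots rather than at the level of the full probability. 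This sidesteps exactly the obstacle you raise: one never needs the binomial-style bound $(A+S/K)^S$ on the MIS probability itself, only the additive bound on its $S$-th root. Once that is in hand, \Cref{thm:dpt} gives $\epsilon_{\gamemath_s^{\otimes S}}(Sq)^{1/S} \leq A$, and the result follows in two lines.

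So there is no genuine error in your outline, but the ``bulk of the technical work'' you anticipate is unnecessary: the missing piece is a citation, not a new argument. If you want to make the proof self-contained you would need to reprove the \cite{dong2024salting} inequality, but that is a separate (and already-solved) problem, and it does not use the coherent reprogramming framework at all.
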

\begin{proof}
By \Cref{lemma:salting_classical}, the non-uniform security is related to the multi-instance game $\gamemath_{s, \sf MIS}^{\otimes g}$, with salt space $[K]$. The security of the multi-instance game is closely related to the Direct Product, for salted games, as shown in \cite{dong2024salting} (in the proof of Theorem 4.1). More precisely, 
\begin{align*}
    \epsilon_{\gamemath_{s, \sf MIS}^{\otimes g}}(g q)^{1/g} \leq \epsilon_{\gamemath_{s}^{\otimes g}}(g q)^{1/g} + \frac{g}{K}. 
\end{align*}
Intuitively, the only difference between the multi-instance game and the Direct Product is that, the same salt can be sampled with duplication. The extra factor $\frac{g}{K}$ captures the fact that the salt can be duplicated. 
Combining with \Cref{thm:dpt}, we have:
\begin{align*}
    \epsilon_{\gamemath_{s}}^C(q, S) & \leq 4 \left( \epsilon_{\gamemath_{s, \sf MIS}^{\otimes S}} (S q) \right)^{1/S} \\
     & \leq 4 \left(\epsilon_{\gamemath_{s}^{\otimes S}}(S q)^{1/S} + \frac{S}{K} \right) \\
     & \leq 4 \cdot \frac{S}{K} + 4 \cdot 2^{2k} \binom{q+k}{k}^2 p(R). 
\end{align*}

\end{proof}

\subsubsection{Application 3: Multi-Image Inversion}

\hfill \break

Our first result establishes the quantum hardness of multi-image inversion, which is a tight bound as already proven in \cite{chung2020tight}, but achieved here in a much simpler way, directly from our quantum lifting theorem.

\begin{lemma}[Quantum Hardness of Multi-Image Inversion]
\label{lemma:hardness_multi_image}

    For any $\vec{y} = (y_1, ..., y_k) \in Y^k = [N]^k$ (without duplicates) and for any $q$-quantum query algorithm $\cA$ whose task is to invert all the images in $\vec{y}$, the success probability of $\cA$ is upper bounded by:
    \begin{align*}
         & \Pr_H[\cA^{\ket{H}}(\vec{y}) \rightarrow \vec{x} = (x_1, ..., x_k) \ : \ H(x_i) = y_i \ \forall i \in [k]] \nonumber \\
             \leq &               \left[ \frac{4e(q + k)^2}{Nk} \right]^k
    \end{align*}
\end{lemma}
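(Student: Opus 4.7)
The plan is to cast multi-image inversion as a multi-output $k$-search game with image-only winning relation, apply the quantum lifting theorem for image relations (Theorem~\ref{thm:direct_product_strong}), and then simplify the combinatorial factor.

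\textbf{Step 1: Framing as a multi-output $k$-search game.} I set the challenge distribution $\pi_H$ to output the fixed tuple $\vec{y} = (y_1,\ldots,y_k) \in [N]^k$ (with distinct entries), and define the winning relation
\[
R_{H,\vec{y}} \;=\; \{(y_1',\ldots,y_k') \in [N]^k \;:\; (y_1',\ldots,y_k') = \vec{y}\}.
\]
Note that $R_{H,\vec{y}}$ depends only on the image tuple, so the image-relation lifting theorem applies. An adversary $\cA$ wins this game exactly when it produces $\vec{x}=(x_1,\ldots,x_k)$ with $H(x_i)=y_i$ for every $i$, so the success probability in the game is precisely the probability of multi-image inversion that we want to bound.

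\textbf{Step 2: Applying the image-relation lifting theorem.} Invoking Theorem~\ref{thm:direct_product_strong} yields
\[
\Pr\bigl[\cA^{\ket H}(\vec{y}) \text{ inverts all } y_i\bigr] \;\leq\; 2^{2k} \binom{q+k}{k}^{\!2} \cdot p(R),
\]
where $p(R) = \Pr[\exists\,\pi \in \sym_k : (y_{\pi(1)},\ldots,y_{\pi(k)}) \in R \,:\, \vec{y}' \xleftarrow{\$} [N]^k]$. Since the entries of the challenge $\vec{y}$ are distinct, there are exactly $k!$ permutations of $\vec{y}$, each being a distinct tuple in $[N]^k$, so
\[
p(R) \;=\; \frac{k!}{N^k}.
\]

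\textbf{Step 3: Combinatorial simplification.} The remaining task is to massage $2^{2k}\binom{q+k}{k}^{2}\cdot k!/N^k$ into the stated form $[4e(q+k)^2/(Nk)]^k$. The key identity is $\binom{q+k}{k}\cdot k! = (q+k)!/q! \leq (q+k)^k$, together with the standard bound $\binom{q+k}{k} \leq (e(q+k)/k)^k$ (which follows from $k! \geq (k/e)^k$). Combining these,
\[
\binom{q+k}{k}^{2} \cdot k! \;\leq\; \binom{q+k}{k}\cdot (q+k)^k \;\leq\; \left(\frac{e(q+k)}{k}\right)^{\!k} (q+k)^k,
\]
and therefore
\[
2^{2k} \binom{q+k}{k}^{2} \cdot \frac{k!}{N^k} \;\leq\; \left[\frac{4e(q+k)^2}{Nk}\right]^{k},
\]
which is the claimed bound.

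The proof is essentially a direct application of Theorem~\ref{thm:direct_product_strong}; the only place requiring any care is the combinatorial simplification, where one must choose to combine one factor of $\binom{q+k}{k}$ with $k!$ (using $(q+k)!/q! \leq (q+k)^k$) and bound the other factor separately (using the $e$-bound), rather than naively using $k! \leq k^k$, which would lose an extra factor of $e$ per copy.
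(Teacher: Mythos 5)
Your proposal is correct and follows essentially the same route as the paper: define the singleton image relation $R=\{\vec{y}\}$, apply Theorem~\ref{thm:direct_product_strong} with $p(R)=k!/N^k$, and simplify. Your algebra in Step 3 pairs one $\binom{q+k}{k}$ with $k!$ while the paper bounds both binomials by $(q+k)^k/k!$ and then applies Stirling to the leftover $1/k!$; these are equivalent manipulations yielding the identical final bound.
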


\begin{proof}
We will show this using our strong quantum lifting theorem for image relations (Theorem~\ref{thm:direct_product_strong}).
Define $R$ as the relation over $[N]^k$, with $H: [M] \rightarrow [N]$ such that: $R = \{y_1, ..., y_k\}$. 
Then for each permutation $\pi$, we have $\Pr[(y_{\pi(1)}, ..., y_{\pi(k)}) \in R \ | \ (y_1, ..., y_k) \leftarrow [N]^k] = \frac{1}{N^k}$. Using that the number of permutations $\pi$ is $k!$ leads to:
\begin{align*}
    \Pr_H[\cA^{\ket{H}}(\vec{y}) \rightarrow \vec{x} = (x_1, ..., x_k) \ : \ H(x_i) = y_i \ \forall i \in [k]] \leq {2^{2k}{{q + k} \choose k}^2} \cdot \frac{k!}{N^k}
\end{align*}
Using first the inequality ${{q + k} \choose k} \leq \frac{(q + k)^k}{k!}$ and then the Stirling approximation $k! \geq \left(\frac{k}{e}\right)^k$, we get:
\begin{align*}
    {2^{2k}{{q + k} \choose k}^2} \cdot \frac{k!}{N^k} &\leq \left(\frac{4}{N}\right)^k \cdot k! \cdot \left[\frac{(q + k)^k}{k!}\right]^2 \\
    &\leq \left(\frac{4}{N}\right)^k \cdot (q + k)^{2k} \cdot \left(\frac{e}{k}\right)^k = \left[ \frac{4e(q + k)^2}{Nk} \right]^k 
\end{align*}
\end{proof}

\subsubsection{Application 4: Multi-Collision Finding and Multi-Search}

\hfill \break

Next, we can determine the quantum hardness of the multi-collision problem, namely finding $k$ different inputs that map to the same output of the random oracle.

\begin{lemma}[Quantum Hardness of Multi-Collision Finding and Salted Multi-Collision Finding]
\label{lemma:hardness_multi_collision}

    For any $q$-quantum query algorithm $\cA$, we have the upper bound for solving the $k$-multi-collision problem:
        \begin{equation*}
        \Pr_H[\cA^{\ket{H}}() \rightarrow \vec{x} = (x_1, ..., x_k) \ : \ H(x_1) = ... = H(x_k)] \leq 
        \frac{1}{N^{k - 1}} \left[\frac{2e(q+k)}{k}\right]^{2k}
    \end{equation*}
     Any quantum algorithm $\cA$ equipped with $q$ quantum queries and $S$-bit of classical advice can win the salted multi-collision finding game with salted space $[K]$ with probability at most:
 \begin{equation*}
     \Pr_H[\cA() \rightarrow \vec{x} = (x_1, ..., x_k) \ : \ H(x_1) = ... = H(x_k)] \leq \frac{4}{N^{k - 1}} \left[\frac{2e(q+k)}{k}\right]^{2k} + \frac{4S}{K}.
 \end{equation*}
\end{lemma}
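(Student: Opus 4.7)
The plan is to reduce both parts directly to previously established theorems by computing $p(R)$ for the multi-collision relation and then simplifying the combinatorial factor. Let the winning relation be
\[
R = \{(y_1,\ldots,y_k) \in [N]^k : y_1 = y_2 = \cdots = y_k\} \subseteq Y^k.
\]
Note that $R$ is symmetric, so membership in $R$ is invariant under any permutation $\pi \in \sym_k$. Hence
\[
p(R) = \Pr_{(y_1,\ldots,y_k) \uniformgets Y^k}\bigl[y_1 = y_2 = \cdots = y_k\bigr] = \frac{1}{N^{k-1}}.
\]

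For the first (uniform) bound, I would apply the Quantum Lifting Theorem for Image Relations (Theorem~\ref{thm:direct_product_strong}) to the multi-output $k$-search game whose winning relation is $R$. This immediately yields
\[
\Pr_H\bigl[\cA^{\ket{H}}() \to \vec{x} : H(x_1) = \cdots = H(x_k)\bigr] \leq 2^{2k}\binom{q+k}{k}^{2}\cdot\frac{1}{N^{k-1}}.
\]
Then, mimicking the simplification performed in the proof of Lemma~\ref{lemma:hardness_multi_image}, use $\binom{q+k}{k} \leq \frac{(q+k)^k}{k!}$ and Stirling's inequality $k! \geq (k/e)^k$ to obtain
\[
2^{2k}\binom{q+k}{k}^{2} \leq \frac{2^{2k}(q+k)^{2k}}{(k!)^2} \leq \left[\frac{2e(q+k)}{k}\right]^{2k},
\]
which combined with the $\frac{1}{N^{k-1}}$ factor yields the first claimed bound.

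For the second (salted, non-uniform) bound, I would plug the same value $p(R) = 1/N^{k-1}$ into Lemma~\ref{lemma:salted_multi_advice}, which directly gives
\[
\epsilon^C_{\gamemath_s}(q,S) \leq \frac{4S}{K} + 4\cdot 2^{2k}\binom{q+k}{k}^{2}\cdot\frac{1}{N^{k-1}}.
\]
Applying the same Stirling simplification as above to the binomial term yields the second claimed bound. The only minor subtlety to check is that the multi-collision game indeed falls into the ``image relation'' framework of Theorem~\ref{thm:direct_product_strong} (the relation depends only on $\vec{y}$ and not on $\vec{x}$ or auxiliary $z$), which is immediate from the definition of $R$. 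I do not anticipate a genuine obstacle; the work is entirely in identifying the correct $p(R)$ and streamlining the combinatorial bound, with no new quantum machinery required beyond the two theorems already established.
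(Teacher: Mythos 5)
Your proposal is correct and follows essentially the same route as the paper: identify the symmetric relation $R$, compute $p(R)=1/N^{k-1}$, invoke Theorem~\ref{thm:direct_product_strong} for the uniform bound and Lemma~\ref{lemma:salted_multi_advice} for the salted non-uniform bound, then simplify the combinatorial factor via $\binom{q+k}{k}\leq (q+k)^k/k!$ and $k!\geq (k/e)^k$. No gaps.
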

\begin{proof}
We will show this using our strong quantum lifting theorem for image relations (Theorem~\ref{thm:direct_product_strong}).
Define $R := \{y, ..., y\}_y$ the relation over $[N]^k$, where $H: [M] \rightarrow [N]$. Then for each permutation $\pi$, we have $\Pr[(y_{\pi(1)}, ..., y_{\pi(k)}) \in R \ | \ (y_1, ..., y_k) \leftarrow [N]^k] = \frac{1}{N^{k-1}}$. As $R$ is permutation invariant, this implies that:
    \begin{equation*}
        \Pr_H[\cA^{\ket{H}}() \rightarrow \vec{x} = (x_1, ..., x_k) \ : \ H(x_1) = ... = H(x_k)] \leq 
         {2^{2k}{{q + k} \choose k}^2} \cdot \frac{1}{N^{k - 1}}
    \end{equation*}
    Using first the inequality ${{q + k} \choose k} \leq \frac{(q + k)^k}{k!}$ and then the Stirling approximation:
\begin{align*}
    {2^{2k}{{q + k} \choose k}^2} \cdot \frac{1}{N^{k-1}} &\leq N \cdot \left(\frac{4}{N}\right)^k \cdot \left[\frac{(q + k)^k}{k!}\right]^2 \\
    &\leq N \left(\frac{4}{N}\right)^k \cdot (q + k)^{2k} \cdot \left(\frac{e}{k}\right)^{2k} 
    = \frac{1}{N^{k - 1}} \left[\frac{2e(q+k)}{k}\right]^{2k}
\end{align*}
 Finally, the security of salted multi-collision against non-uniform quantum adversaries equipped with $S$ bits of advice follows by combining this quantum hardness bound of multi-collision with Lemma~\ref{lemma:salted_multi_advice}.
\end{proof}

Finally, we consider another search application, namely the task of determining $k$ different inputs that all map to $0$ under the random oracle. One of the main motivations behind this problem is its relation to the notion of proof-of-work in the blockchain context~\cite{garay2015bitcoin}.

\begin{lemma}[Quantum Hardness of Multi-Search]
    For any $q$-quantum query algorithm $\cA$ whose task is to find different preimages of $0$ of a random oracle $H$, the success probability of $\cA$ is upper bounded by:
       \begin{align*}
         \Pr_H[\cA^{\ket{H}}() \rightarrow \vec{x} = (x_1, ..., x_k) \ : \ H(x_i) = 0 \ \forall i \in [k]]  
             \leq              \left[\frac{4e^2 (q + k)^2}{Nk^2} \right]^k
    \end{align*}
\end{lemma}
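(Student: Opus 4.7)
The plan is to obtain the bound as a direct consequence of the Quantum Lifting Theorem for Image Relations (Theorem~\ref{thm:direct_product_strong}), following the same template used for Lemmas~\ref{lemma:hardness_multi_image} and \ref{lemma:hardness_multi_collision}. First I would cast the multi-search task as a multi-output $k$-search game whose winning relation depends only on images: set $R = \{(0,0,\ldots,0)\} \subseteq [N]^k$, so that a tuple $\vec{x} = (x_1,\ldots,x_k)$ wins exactly when $H(\vec{x}) \in R$, i.e., $H(x_i) = 0$ for all $i$. Note that $R$ is trivially permutation-invariant, so the ``exists a permutation'' quantifier in the definition of $p(R)$ collapses to a single case.

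Next I would compute $p(R)$ directly: for $(y_1,\ldots,y_k) \uniformgets Y^k = [N]^k$, the only winning tuple is the all-zero tuple, giving
\begin{equation*}
p(R) = \Pr\bigl[(y_1,\ldots,y_k) = (0,\ldots,0)\bigr] = \frac{1}{N^k}.
\end{equation*}
Plugging this into Theorem~\ref{thm:direct_product_strong} yields
\begin{equation*}
\Pr_H\bigl[\cA^{\ket{H}}() \to \vec{x} : H(x_i) = 0 \ \forall i\bigr] \;\leq\; 2^{2k}\binom{q+k}{k}^2 \cdot \frac{1}{N^k}.
\end{equation*}

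Finally, I would simplify the binomial factor just as in the proofs of the preceding two lemmas. Using $\binom{q+k}{k} \leq (q+k)^k / k!$ followed by the Stirling bound $k! \geq (k/e)^k$, one gets $\binom{q+k}{k} \leq e^k (q+k)^k / k^k$, hence
\begin{equation*}
2^{2k}\binom{q+k}{k}^2 \cdot \frac{1}{N^k} \;\leq\; \frac{4^k}{N^k} \cdot \frac{e^{2k}(q+k)^{2k}}{k^{2k}} \;=\; \left[\frac{4 e^2 (q+k)^2}{N k^2}\right]^k,
\end{equation*}
which is exactly the claimed bound. There is no real obstacle here since the lemma is a straightforward instantiation of Theorem~\ref{thm:direct_product_strong}; the only mildly delicate point is to verify that the natural relation for multi-search ($R = \{\vec{0}\}$) genuinely gives $p(R) = 1/N^k$ rather than $k!/N^k$ as in the distinct-image case of Lemma~\ref{lemma:hardness_multi_image}—a discrepancy that accounts for the extra $1/k^k$ factor in the final bound relative to multi-image inversion.
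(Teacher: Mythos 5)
Your proposal is correct and follows essentially the same route as the paper's proof: instantiate Theorem~\ref{thm:direct_product_strong} with the permutation-invariant relation $R = \{(0,\ldots,0)\}$, compute $p(R) = 1/N^k$, and simplify $2^{2k}\binom{q+k}{k}^2/N^k$ via $\binom{q+k}{k} \leq (q+k)^k/k!$ and Stirling. Your remark about why no $k!$ factor appears here (in contrast to multi-image inversion) is also exactly the point the paper makes by noting that $R$ is permutation invariant.
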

Note that this bound is asymptotically tight, as an algorithm with $q$ queries can use $q/k$ queries to find each pre-image (Grover's algorithm), resulting in a probability of $\Theta\left( \left\{ (\frac{q}{k})^2/N \right\}^k\right)$. 
\begin{proof}
    We will show this using our strong quantum lifting theorem for image relations (Theorem~\ref{thm:direct_product_strong}).
Define $R := \{0, ..., 0\}$ the relation over $[N]^k$, where $H: [M] \rightarrow [N]$. Then for each permutation $\pi$, we have $\Pr[(y_{\pi(1)}, ..., y_{\pi(k)}) \in R \ | \ (y_1, ..., y_k) \leftarrow [N]^k] = \frac{1}{N^{k}}$. As $R$ is permutation invariant, this implies that:
  \begin{align*}
    \Pr_H[\cA^{\ket{H}}() \rightarrow \vec{x} = (x_1, ..., x_k) \ : \ H(x_i) = 0 \ \forall i \in [k]] 
             \leq 2^{2k} \cdot {{q + k} \choose k}^2 \frac{1}{N^k}
    \end{align*}
    Using first the inequality ${{q + k} \choose k} \leq \frac{(q + k)^k}{k!}$ and then the Stirling approximation:
\begin{align*}
    {2^{2k}{{q + k} \choose k}^2} \cdot \frac{1}{N^k} &\leq \left(\frac{4}{N}\right)^k \cdot \left[\frac{(q + k)^k}{k!}\right]^2 \\
    &\leq \left(\frac{4}{N}\right)^k \cdot (q + k)^{2k} \cdot \left(\frac{e}{k}\right)^{2k} 
   = \left[\frac{4e^2 (q + k)^2}{Nk^2} \right]^k
\end{align*}
\end{proof}

\section*{{Acknowledgements}}
J.G. was partially supported by NSF SaTC grants no. 2001082 and 2055694.
F.S. was partially supported by NSF grant no. 1942706 (CAREER). J.G. and F.S. were also partially support by Sony by means of the Sony Research Award Program.
A.C. acknowledges support from the National Science Foundation grant CCF-1813814, from the AFOSR under Award Number FA9550-20-1-0108 and the support of the Quantum Advantage Pathfinder (QAP) project, with grant reference EP/X026167/1 and the UK Engineering and Physical Sciences Research Council.

\printbibliography

\end{document}